\newcommand{\boldA}{\mathbf{A}}
\newcommand{\boldB}{\mathbf{B}}
\DeclareMathOperator{\matexpb}{E} 
\DeclareMathOperator{\esb}{E_{\textrm{\tiny S}}} 
\DeclareMathOperator{\evb}{E_{\textrm{\tiny V}}} 
\DeclareMathOperator{\edab}{E_{\textrm{\tiny DA}}}
\newcommand{\matexp}[1]{\matexpb[#1]}
\newcommand{\es}[1]{\esb[#1]}
\newcommand{\ev}[1]{\evb[#1]}
\newcommand{\eda}[1]{\edab[#1]}
\DeclareMathOperator{\cov}{cov} 
\newcommand{\unif}[2]{\textrm{U}[#1, #2]}
\newcommand{\mtranspose}{\mathrm{\textrm{\tiny T}}}
\newcommand{\ud}{\mathrm{d}}
\newcommand{\ind}[1]{\mathbb{I}({#1})}
\newcommand*{\dd}{\; \mbox{d}}
\newcommand{\interior}[1]{\textrm{int\;}#1}
\newcommand{\cl}[1]{\textrm{cl\;}#1}
\newcommand{\RofN}[1]{\mathbb{R}^#1}
\newcommand{\uchi}{\textrm{\raisebox{0.2ex}{$\chi$}}}
\newcommand{\charomega}[1]{\mbox{\large $\uchi$}_{\mbox{\tiny $#1$}}}
\newcommand{\charx}[2]{\mbox{\large $\uchi$}_{\mbox{\tiny $#1$}}(#2)}
\newcommand{\hatchi}[1]{\mbox{\large $\widehat \uchi$}_{\mbox{\tiny $#1$}}}
\newcommand{\hatchix}[2]{\mbox{\large $\widehat \uchi$}_{\mbox{\tiny $#1$}}(#2)}
\newcommand{\calP}{\mathscr{A}}
\newcommand{\calF}{\mathcal{F}}
\newcommand{\calK}{\mathcal{K}}
\newcommand{\calD}{\mathcal{D}}
\newcommand{\textg}{\textrm{g}}
\newcommand{\Lp}[1]{L_{#1}}
\newcommand{\bP}{\textrm{\bf P}}
\newcommand{\calm}{\mathfrak{m}}
\newcommand{\cald}{\mathfrak{d}}
\newcommand{\calW}{\mathcal{W}}
\newcommand{\calmw}{\mathfrak{m}_{\calW}}
\newcommand{\calY}{\mathcal{Y}}
\newcommand{\calH}{\mathcal{H}}
\newcommand{\parA}{\mathfrak{A}}
\DeclareMathOperator*{\arginf}{arg\,inf}
\DeclareMathOperator*{\argmin}{arg\,min}
\newtheorem{thm}{\noindent Theorem}[section]
\newtheorem{cor}{\noindent Corollary}[section]
\newtheorem{prop}{\noindent Proposition}[section]
\theoremstyle{definition}
\newtheorem{ex}{\noindent Example}[section]
\newtheorem{defn}{\noindent Definition}[section]
\newtheorem{rem}{\noindent Remark}[section]
\title{Expectations of Random Sets and Their Boundaries \\ Using Oriented Distance Functions}
\author{Larissa I. Stanberry$^{*}$ and Hanna K. Jankowski$^{**}$ \vspace{5mm} \\ 
$^{*}$School of Mathematics, University of Bristol. \vspace{1mm} \\
$^{**}$Department of Mathematics and Statistics, York University.}
\begin{document}
\maketitle

\begin{abstract}
Shape estimation and object reconstruction are common problems in image analysis. Mathematically, viewing objects in the image plane as random sets reduces the problem of shape estimation to inference about sets. Currently existing definitions of the expected set rely on different criteria to construct the expectation. This paper introduces new definitions of the expected set and the expected boundary, based on oriented distance functions. The proposed expectations have a number of attractive properties, including inclusion relations, convexity preservation and equivariance with respect to rigid motions. The paper introduces a special class of separable oriented distance functions for parametric sets and gives the definition and properties of separable random closed sets. Further, the definitions of the empirical mean set and the empirical mean boundary are proposed and empirical evidence of the consistency of the boundary estimator is presented. In addition, the paper gives loss functions for set inference in frequentist framework and shows how some of the existing expectations arise naturally as optimal estimators. The proposed definitions of the set and boundary expectations are illustrated on theoretical examples and real data.
\end{abstract}

\section{Introduction}
\label{sec:intro}

Boundary reconstruction and shape estimation are frequently encountered problems in image analysis. For example, it is often of interest to determine a characteristic shape of a cell, reconstruct tissue boundary in medical images, or estimate a probable area in the earthquake disaster zone. Mathematically, the objects of interest can be viewed as sets, whilst inherent stochasticity of the acquisition process turns them into random entities. The problems of boundary reconstruction and shape estimation thus reduce to inference about random sets. 

Early results in the theory of random sets date back to \citet{Choquet1953} with a thorough mathematical treatment appearing in \citet{Matheron1975} and the most recent developments in the field presented in \citep{Molchanov2005}. Because the family of closed sets is nonlinear, there is no natural way to define the expected set. Currently, there exist a number of definitions of the expectation, though neither can be spoken of as the best. 
As pointed out in \citet{Molchanov2005}, the definition of the expectation depends on set features that are important to emphasize. 

Many of the existing definitions are not based on a random set directly, but use an embedding into a function space. For example, the Vorob'ev definition uses the representation of a set given by its characteristic function to construct an expectation that is optimal with respect to Lebesgue measure \citep{Vorobev1984}. The distance-average approach maps a set into the space of distance functions, so that the expected set is optimal among all of the level sets of the expected distance function with respect to a predefined metric \citep{BaddeleyMolchanov1998}. 
 
Currently existing definitions of the expected set do not address the question of boundary estimation, which is often of primary interest in practice. In this paper, we introduce a new definition of the expected set and the expected boundary based on oriented distance functions (ODFs). As opposed to a conventional distance function, an ODF takes into account the set and its complement, providing a more informative representation of a set \citep{DelfourZolesio2001}. In comparison with currently existing definitions, the new expectations have a number of attractive theoretical properties. In particular, they satisfy inclusion relations, preserve convexity and remain equivariant with respect to rigid motions. We introduce a new class of ODFs for parametric sets and show the connection between the expectation of a parametric random closed set and the expected parameter. We also outline a general framework for set inference and show how different expectations arise as natural estimators for different loss functions.


\section{Random Closed Sets}
\label{sec:rcs}

\subsection{Notation}
For set $A \subset \RofN{d}$, denote by $A^c, \partial A, \cl{A}, \interior{A}, \lambda({A})$ its complement, boundary, closure, interior, and Lebesgue measure, respectively. We write $x=(x_1, \ldots, x_d)$ for point $x \in \RofN{d}$ and use $|\cdotp|$ for the standard Euclidean norm. 
Let $\calF$ (resp. $\calK$) be the family of closed (resp. compact) subsets of $\RofN{d}$ and let the tripple $(\Omega,\calP, \bP)$ denote the probability space.

\subsection{Definitions and Examples}
\begin{defn}\label{def:racs}
A random closed set is the mapping $\boldA : \Omega \mapsto \calF$ such that for every compact set $K\in \calK$, $$\lbrace\omega: \boldA(\omega) \cap K\neq\emptyset\rbrace \in \calP.$$
\end{defn}
Here, we consider random closed sets taking values in $\RofN{d}$. Throughout this paper, we write ``r.c.s.'' for a random closed set, though some texts prefer RACS. We write $\boldA$ for an r.c.s. $\boldA(\omega)$ and reserve capital roman letters for closed subsets of $\RofN{d}$.

Random closed sets give rise to several random variables and random functions, including the characteristic function 
\begin{equation}\label{def:charf}
      \charomega{A}(x) =\left\lbrace \begin{array}{ll}
        1, & \textrm{ if $x\in A$},\\
        0, & \textrm{ if $x\notin A$},
        \end{array}
        \right.
\end{equation}
the Lebesgue measure $\lambda(A) = \int_{\RofN{d}}\charomega{A}(x)\ud x$,
and the distance function from point $x\in \RofN{d}$ to $A$,
    \begin{equation}\label{def:dist}
    d_{A}(x)=\left\lbrace \begin{array}{ll}
        \inf_{y \in A}\rho(x,y), & A \neq \emptyset,\\
        +\infty, & A = \emptyset,
    \end{array} \right.
    \end{equation}
where $\rho(x,y)$ is a metric on $\RofN{d}$. Some examples of $\rho$ include the $L_{2}$ Euclidean metric, the $L_{\infty}$ Chebyshev metric, and the $L_{1}$ Manhattan metric. Throughout this paper, we use the Euclidean distance function with $\rho(x,y) = |x-y|$ in \eqref{def:dist}.

\subsection{Expectations of Random Closed Sets}

Because the space $\calF$ of all closed subsets of $\RofN{d}$ is nonlinear, there is no natural way to define the expected set. In this section, we briefly review some of the existing definitions and refer for more details to \citep{Aumann1965, ArtsteinVitale1975, Vorobev1984, BaddeleyMolchanov1998, Molchanov2005}.

\begin{defn}[Selection expectation]\label{def:selection}
A random element $\xi$ is called the selection of $\boldA$, if it belongs to $\boldA$
with probability one. A selection $\xi$ is integrable, if $\matexpb{|\xi|}<\infty$.
The selection expectation, $\es{\boldA}$,  is the closure of the
set of the expectations of all integrable selections of $\boldA$,
$$\es{\boldA} =\cl{\lbrace \matexp{\xi} \colon \xi \textrm{ is a selection and } \matexpb{|\xi|}<\infty \rbrace}.$$
\end{defn}

\noindent The selection expectation depends on the structure of the probability space \citep[see Example 1.14, Section 2]{Molchanov2005}. In addition, if the probability space is nonatomic, $\es\boldA$ is necessarily convex, even for nonconvex deterministic sets.

In the context of image analysis, the Vorob'ev expectation is perhaps the most intuitive construction \citep{Vorobev1984}. Consider an r.c.s. $\boldA\subset \calD$. For every point $x\in \calD$, we assign a probability mass depending on whether or not $x\in\boldA$. For example, assigning one to all points in $\boldA$ and zero to points in $\boldA^{c}$ converts the observed set into a binary image. Averaging binary images over all realizations of $\boldA$, we obtain a gray-scale image, where the intensity at each pixel is the probability of the point being in $\boldA$. The gray-scale image is not a binary image, unless $\boldA$ is deterministic. The Vorob'ev definition then provides a criterion to construct an expected set that is optimal with respect to Lebesgue measure.

More precisely, given the characteristic function $\charomega{\boldA}$ of an r.c.s. $\boldA$, define the coverage function by
\begin{equation}\label{eq:covFn}
\matexp{\charx{\boldA}{x}}=\bP(x\in \boldA).
\end{equation}
The excursions sets of the coverage function are given by
\begin{equation}\label{eq:exSets}
  \boldA_u=\lbrace x\in \RofN{d}: \matexp{\charx{\boldA}{x}} \geq u\rbrace, \quad q\in [0,1].
\end{equation}
\begin{defn}[Vorob'ev expectation]\label{def:vorExp}
The expectation, $\ev{\boldA}$, of a random closed set $\boldA$ is the excursion set $\boldA_q$, where $q\in[0,1]$ is such that
\begin{equation}\label{def:vor}
\lambda(\boldA_u)\leq \matexp{\lambda(\boldA)}\leq \lambda(\boldA_q) \quad \textrm{for all } u>q.
\end{equation}
\end{defn}
\noindent Note that the optimality criterion \eqref{def:vor} ensures that the Vorob'ev expectation ignores sets of measure zero.

The distance-average expectation is based on a representation of a set given by some function $f_{A}\colon \calD\mapsto \mathbb{R}$ \citep{BaddeleyMolchanov1998}. We call $f_{A}$ a representative function of the set $A$. Examples of $f$ include the distance function, the ODF, the characteristic function and others. For an r.c.s. $\boldA$ with representative function $f_{\boldA}$, let $\matexp{f_{\boldA}(x)}$ be the expected value of $f_{A}$ at $x$, assuming it exists. 
\begin{defn}[Distance-average expectation]\label{def:da}
For a compact window $\calW$, define a (pseudo-) metric $\calm(\boldA, \boldB) = \calm_{\calW}(f_{\boldA}(\cdot), f_{\boldB}(\cdot))$. The distance-average expectation, $\eda{\boldA}$, of an r.c.s. $\boldA$ is the level set 
\begin{equation}\label{def:daExp}
\boldA_{u} = \lbrace x \in \calW \colon \matexp{f_{\boldA}(x)}\leq  u \rbrace, \textrm{ where } u = \arginf_{s \in \mathbb{R}} \calm (\matexp{f_{\boldA}},  f_{\boldA_{s}})
\end{equation}
\end{defn}
Examples of pseudometric $\calm$ in \eqref{def:daExp} include $\Lp{q}$ distances and their variates, Baddeley's $\Delta^{q}$-distance, or any custom definitions \citep{Baddeley1992}. Note that the pseudometric $\calm$ is computed over the window $\calW$, which can be the entire domain $\calD$ or its subset. For brevity, we omit the subscript $\calW$ whenever possible. 

The distance-average expectation strongly depends on the choice of the representative function $f$, window $\calW$, (pseudo-) metric $\calm$ and parameters of $\calm$ \citep{BaddeleyMolchanov1998}. The Vorob'ev expectation can be viewed as a special case of the distance-average definition with $f_{\boldA}(x)=1-\chi_{\boldA}(x)$ and $L^{1}$ distance $\calm$ \citep[see Example~5.14]{BaddeleyMolchanov1998}.

The linearization approach gives a unified view of the existing expectations \citep{Molchanov2005}. In particular, consider a mapping $f \colon \calF \mapsto \calY$, where $\calY$ is some Banach space. Let $f_{\boldA}$ be the image of $\boldA$ in $\calY$ and assume that $\matexp{|f_{\boldA}(x)|}<\infty$. If there exists a unique $F \in \calF$ such that $f_{F}(x) = \matexp{f_{\boldA}(x)}$, then declare $\matexp{\boldA} = F$. In general, the inverse image rarely exists, and we need to define the criterion so that the resulting expectation is optimal in some sense. This is done as follows. 
Let $\cald$ be a pseudometric in $\calY$, then the expectation of an r.c.s. $\boldA$ is given by 
\begin{equation}\label{def:linAppr}
\matexp{\boldA} = \argmin_{F\in\calF}\cald(\matexp{\zeta(\boldA)}, \zeta(F)).
\end{equation}
Minimizing over the entire family $\calF$ is often nontrivial. Therefore, we can consider optimizing $\cald(\cdot, \cdot)$ over a subfamily $\calH\subset\calF$ of candidate sets. Clearly, the resulting expectation depends on the choice of Banach space $\calY$, mapping $f$, metric $\cald$, subfamily $\calH$ and any implicit parameters that enter into the calculation. 

For bounded closed covex sets, the selection expectation is an example of a linearization approach, where the embedding into a Banach space is realised by support functions of random sets on the unit sphere and the expected set is given by the support function of the selection expectation. Another example is the Vorob'ev definition, which is based on the mapping given by $\charomega{\boldA}$ and gives the expected set $\ev{\boldA}$ which is optimal with respect to Lebesgue measure; the family of candidate sets $\calH$ consists of excursion sets \eqref{eq:exSets} of the coverage function. The distance-average expectation is based on the embedding into the space of representative function and the resulting average is optimal among the level sets of the expected representation with respect to metric $\calm(\cdot, \cdot)$ \citep{BaddeleyMolchanov1998, Molchanov2005}.

\subsection{Oriented Distance Functions}
\label{sec:ODF}

In comparison to a distance function (\ref{def:dist}), an ODF takes into account both the set and its compliment, thus reflecting the interior and exterior properties. In addition, the ODF provides a level-set description of the boundary.
\begin{defn}\label{def:sdist}
The ODF from point $x$ to set $A\subset \RofN{d}$ with $\partial A \neq \emptyset$ is defined by
\begin{equation}\label{eq:sdist}
  b_A(x)=d_A(x)-d_{A^c}(x), \ \ \textrm{for all } x\in \RofN{d}.
\end{equation}
\end{defn}
\noindent Using basic properties of the distance function, the expression (\ref{eq:sdist}) can be written as
\begin{equation}
  b_A(x)=\left\lbrace\begin{array}{ll}
  \hspace{3mm} d_A(x)= d_{\partial A}(x), & x\in \interior{A^c} \\
  \hspace{3mm} 0, & x\in \partial A\\
  -d_{A^c}(x)=-d_{\partial A}(x), & x \in \interior{A} \;.
  \end{array} \right.
\end{equation}
Below we summarize some properties of the ODFs and refer for proofs and details to \citet{DelfourZolesio2001}.

\begin{prop}
Let $A$ and $B$ be some subsets of $\cl \calD$ with $\partial A, \partial B \neq \emptyset$.
\begin{enumerate}
\item \label{odf:prop1} The ODF provides a level-set description of a set, i.e 
$$\cl{A}=\lbrace x: b_A(x)\leq 0 \rbrace , \quad \partial A=\lbrace x \colon b_A(x)=0 \rbrace .$$
\item\label{odf:prop2} $A\supset B$ implies $b_A\leq b_B$.
\item\label{odf:prop3} $b_A\leq b_B$ in $\calD$ iff $\cl{B}\subset \cl{A} \textrm{ and } \cl{A^{c}}\subset \cl{B^c}$.
\item\label{odf:prop4}  \label{property:sdist1:iii}$b_A=b_B$ in $\calD$ iff $\cl{A}=\cl{B}$ and $\partial A=\partial B$.
\item\label{odf:prop5} For a convex set $A$ with $\partial A \neq \emptyset,\; b_A=b_{\cl{A}}$ is a convex function in $\RofN{d}$.
\item\label{odf:prop6}  The function $b_A(x)$ is uniformly Lipschitz in $\RofN{d}$ with constant one, i.e.
\begin{equation*}
  |b_A(y)-b_A(x)|\leq |y-x| \quad \textrm{for any $x,y\in \RofN{d}$}.
\end{equation*}
In addition, $b_A(x)$ is (Fre\'chet) differentiable with $|\nabla b_A(x)|\leq 1$ a.e. in $\RofN{d}.$ When it exists, the gradient of the ODF coincides with the outward unit normal to the boundary.
\end{enumerate}
\end{prop}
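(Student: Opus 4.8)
The plan is to deduce all six properties from elementary facts about the unsigned distance function $d_A$ of \eqref{def:dist}, namely: $d_A = d_{\cl{A}}$; monotonicity, $A \subset B \Rightarrow d_A \ge d_B$; the $1$-Lipschitz estimate $|d_A(x) - d_A(y)| \le |x-y|$; the identification $d_A(x) = 0 \iff x \in \cl{A}$; and the crossing identity $d_{A^c}(x) = d_{\partial A}(x)$ for $x \in \interior{A}$ (a segment from $x$ to any point of $A^c$ leaves the open set $\interior{A}$, and by connectedness its first exit point lies in $\cl{A} \setminus \interior{A} = \partial A \subset \cl{A^c}$, while conversely $\partial A \subset \cl{A^c}$ gives the reverse inequality). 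I would also record the reflection identity $b_{A^c} = d_{A^c} - d_A = -b_A$, legitimate since $\partial(A^c) = \partial A \neq \emptyset$. The first step is to establish the three-case display preceding the statement from these facts; this is where the hypothesis $\partial A \neq \emptyset$ is used, since it makes $d_{\partial A}$ finite and strictly positive off $\partial A$, so that $b_A = -d_{\partial A} < 0$ on $\interior{A}$ and $b_A = d_{\partial A} > 0$ on $\interior{A^c}$.

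With the three-case formula in hand, Parts 1--4 are short. \emph{Part 1:} since $b_A < 0$ on $\interior{A}$, $b_A = 0$ on $\partial A$, and $b_A > 0$ on $\interior{A^c}$, we read off $\{x : b_A(x) \le 0\} = \interior{A} \cup \partial A = \cl{A}$ and $\{x : b_A(x) = 0\} = \partial A$. \emph{Part 2:} if $A \supset B$ then $d_A \le d_B$ and, since $A^c \subset B^c$, $d_{A^c} \ge d_{B^c}$, so $b_A = d_A - d_{A^c} \le d_B - d_{B^c} = b_B$. \emph{Part 3:} for ``$\Leftarrow$'', $\cl{B} \subset \cl{A}$ gives $d_A = d_{\cl{A}} \le d_{\cl{B}} = d_B$ and $\cl{A^c} \subset \cl{B^c}$ gives $d_{A^c} \ge d_{B^c}$, whence $b_A \le b_B$; for ``$\Rightarrow$'', $b_A \le b_B$ forces $\{b_B \le 0\} \subset \{b_A \le 0\}$, i.e. $\cl{B} \subset \cl{A}$ by Part 1, and feeding the complements through $b_{A^c} = -b_A$, $b_{B^c} = -b_B$ (so $b_{B^c} \le b_{A^c}$) into the same implication gives $\cl{A^c} \subset \cl{B^c}$. \emph{Part 4:} ``$\Rightarrow$'' applies Part 3 to $b_A \le b_B$ and $b_B \le b_A$ to get $\cl{A} = \cl{B}$ and $\cl{A^c} = \cl{B^c}$ (the latter being $\interior{A} = \interior{B}$), hence $\partial A = \cl{A} \setminus \interior{A} = \partial B$ (equivalently, $\partial A = \{b_A = 0\} = \{b_B = 0\} = \partial B$ by Part 1); and ``$\Leftarrow$'' is immediate because $\cl{A} = \cl{B}$, $\partial A = \partial B$ force $\interior{A} = \interior{B}$, so $d_A = d_{\cl{A}} = d_B$ and $d_{A^c} = d_{(\interior{A})^c} = d_{B^c}$.

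For Part 5 I would first reduce to the closed case. $d_A = d_{\cl{A}}$ always holds, and for convex $A$ the classical fact $\interior{\cl{A}} = \interior{A}$ gives $\cl{A^c} = (\interior{A})^c = (\interior{\cl{A}})^c = \cl{(\cl{A})^c}$, hence $d_{A^c} = d_{(\cl{A})^c}$ and therefore $b_A = b_{\cl{A}}$. For closed convex $A$ with nonempty interior I would invoke the support-function representation $b_A(x) = \sup_{|u| = 1}\bigl(\langle u, x\rangle - h_A(u)\bigr)$, where $h_A(u) = \sup_{y \in A}\langle u, y\rangle$: each $x \mapsto \langle u, x\rangle - h_A(u)$ is affine, so the supremum is a convex function. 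The only real work is checking this representation, i.e. that the right-hand side equals $d_A(x)$ for $x \notin A$ (a separating-hyperplane computation) and $-d_{A^c}(x)$ for $x \in \interior{A}$. The degenerate case $\interior{A} = \emptyset$ is handled separately but trivially: then $\cl{A^c} = \RofN{d}$, so $b_A = d_A = d_{\cl{A}}$, convex as a distance to a convex set.

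Part 6 splits into the sharp Lipschitz bound and the gradient statements. A naive triangle inequality only yields Lipschitz constant $2$, so for constant $1$ I would case on the signs of $b_A(x)$ and $b_A(y)$. If both are $\ge 0$, both points lie in $\cl{A^c}$, where $b_A = d_A$, so $|b_A(x) - b_A(y)| = |d_A(x) - d_A(y)| \le |x - y|$; symmetrically if both are $\le 0$, using $b_A = -d_{A^c}$ on $\cl{A}$. If the signs differ, say $b_A(x) < 0 < b_A(y)$, then $x \in \interior{A}$, $y \in \interior{A^c}$, and since $t \mapsto b_A((1-t)x + ty)$ is continuous the segment meets $\partial A$ at some $z$ (Part 1); then $|b_A(x)| = d_{A^c}(x) \le |x - z|$ and $b_A(y) = d_A(y) \le |z - y|$ because $z \in \partial A \subset \cl{A^c} \cap \cl{A}$, so $|b_A(x) - b_A(y)| = |b_A(x)| + |b_A(y)| \le |x - z| + |z - y| = |x - y|$. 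Differentiability a.e. with $|\nabla b_A| \le 1$ a.e. then follows from Rademacher's theorem together with the Lipschitz bound. The identification of $\nabla b_A$ with the outward unit normal at boundary points where it exists is the step I expect to be the main obstacle, and the one I would lean on \citet{DelfourZolesio2001} for: the argument runs through the fact that wherever $b_A$ is differentiable and $b_A(x) \neq 0$ the nearest-point projection $\pi(x)$ onto $\partial A$ is unique with $\nabla b_A(x) = (x - \pi(x)) / |x - \pi(x)|$ of unit length, and letting such $x$ tend to a differentiability point of $b_A$ on $\partial A$ along the normal line pins the gradient there to the outward normal; establishing the link between differentiability of $b_A$ and uniqueness of nearest points is the delicate part.
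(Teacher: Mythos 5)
Your derivation is sound, but note that the paper itself does not prove this proposition at all: it is stated as a summary of known facts with an explicit pointer to Delfour and Zol\'esio (2001), so there is no in-paper argument to compare against. What you have written is therefore a self-contained elementary substitute, and it checks out: the crossing identity $d_{A^c}=d_{\partial A}$ on $\interior{A}$ via the first-exit point of a segment, the reflection identity $b_{A^c}=-b_A$, the three-case formula, and the level-set arguments for Parts 1--4 are all correct (for Part 3 as stated ``in $\calD$'' one extends the hypothesis $b_A\le b_B$ to $\cl{\calD}$ by continuity and notes points of $\cl{A^c}$ outside $\cl{\calD}$ lie trivially in $\cl{B^c}$, a one-line patch of your global version). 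For Part 5, the reduction $b_A=b_{\cl{A}}$ via $\interior{\cl{A}}=\interior{A}$ and the representation $b_A(x)=\sup_{|u|=1}(\langle u,x\rangle-h_A(u))$ for closed convex $A\neq\RofN{d}$ are both valid (the inside case follows because $A$ is the intersection of its supporting half-spaces, so $d_{A^c}(x)=\inf_u(h_A(u)-\langle u,x\rangle)$), giving convexity as a supremum of affine functions; this is a legitimate alternative to the reference's argument. The sign-splitting proof of the sharp Lipschitz constant and the appeal to Rademacher are fine. The only piece you do not actually prove -- the identification of $\nabla b_A$, where it exists, with the outward unit normal via uniqueness of the projection onto $\partial A$ -- you flag honestly and defer to Delfour--Zol\'esio, which is exactly what the paper does for the whole proposition, so nothing is lost relative to the source.
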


Note that the ODF identifies the set up to its closure and boundary. In other words, ODF describes equivalence classes of sets with identical closure and boundary. 



\begin{ex}\label{example:ODFpoint}
For a singleton $A=\lbrace \theta \rbrace, \; \theta\in \RofN{d}$, the ODF coincides with the distance function and $b_A(x)=d_A(x)=|x-\theta|$ for all $x\in\RofN{d}$. Note that $b_{A}(x)>0$ for all $x\neq \theta$ and the isocontours of $b_{A}(x)$ are spheres centered at $\theta$; Figure \ref{fig:exODFs}, 1st plot.
\end{ex}

\begin{ex}\label{example:ODFdisc}
Consider a closed ball in $\RofN{d}$ centered at the origin with radius $\theta$, i.e. $A=\lbrace x \in \RofN{d}\colon|x|\leq \theta \rbrace$. The distance function of $A$ is $d_A(x)= |x|-\theta$, for $x\in A$, and zero, elsewhere. Hence, the ODF has a form $b_{A}(x)=|x|-\theta$ for all $x\in \RofN{d}.$ The isocontours of $b_{A}(x)$ are spheres centered at the origin; Figure \ref{fig:exODFs}, 2nd plot.
\end{ex}

\begin{ex}\label{example:ODFhplane}
The distance function of a half plane set  $A=\lbrace x \colon x_1 \leq \theta_1 \rbrace$ is given by $d_A(x)=x_1-\theta_1$ for $x\in A^{c}$. The ODF has a form $b_A=x_1-\theta_1$ for all $x \in \RofN{2}$ with isocontours given by vertical lines parallel to the boundary; Figure \ref{fig:exODFs}, 3rd plot.
\end{ex}

\begin{ex}\label{example:ODFtplane}
The ODF of an upper-plane set $ A=\lbrace x \colon x_2 \geq  x_1 \tan{\theta}\rbrace$ in $\RofN{2}$ is $b_{A}(x) = |x| \sin{(\theta-\omega)}$, where $\omega=\arcsin{(x_{2}|x|^{-1})}$; Figure \ref{fig:exODFs}, 4th plot.
\end{ex}

\begin{figure}[!hbtp]
{\scriptsize
\psfrag{-2.5}[][]{} \psfrag{-1.5}[][]{} \psfrag{-0.5}[][]{} \psfrag{0.5}[][]{} \psfrag{1.5}[][]{} \psfrag{2.5}[][]{} 
\psfrag{-2}[][]{-2} \psfrag{-1}[][]{-1} \psfrag{1}[][]{ 1} \psfrag{2}[][]{ 2} \psfrag{0}[][]{ 0} 
\centerline{
     \includegraphics[scale=.3]{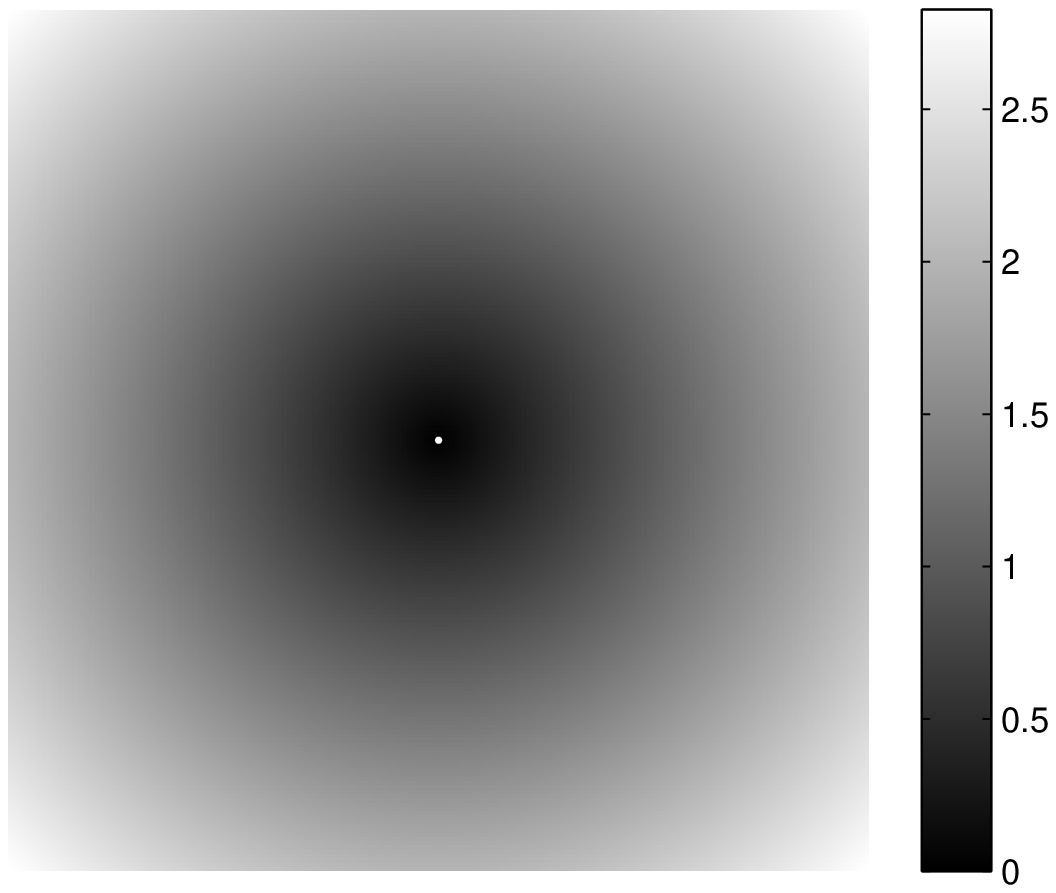}
     \includegraphics[scale=.3]{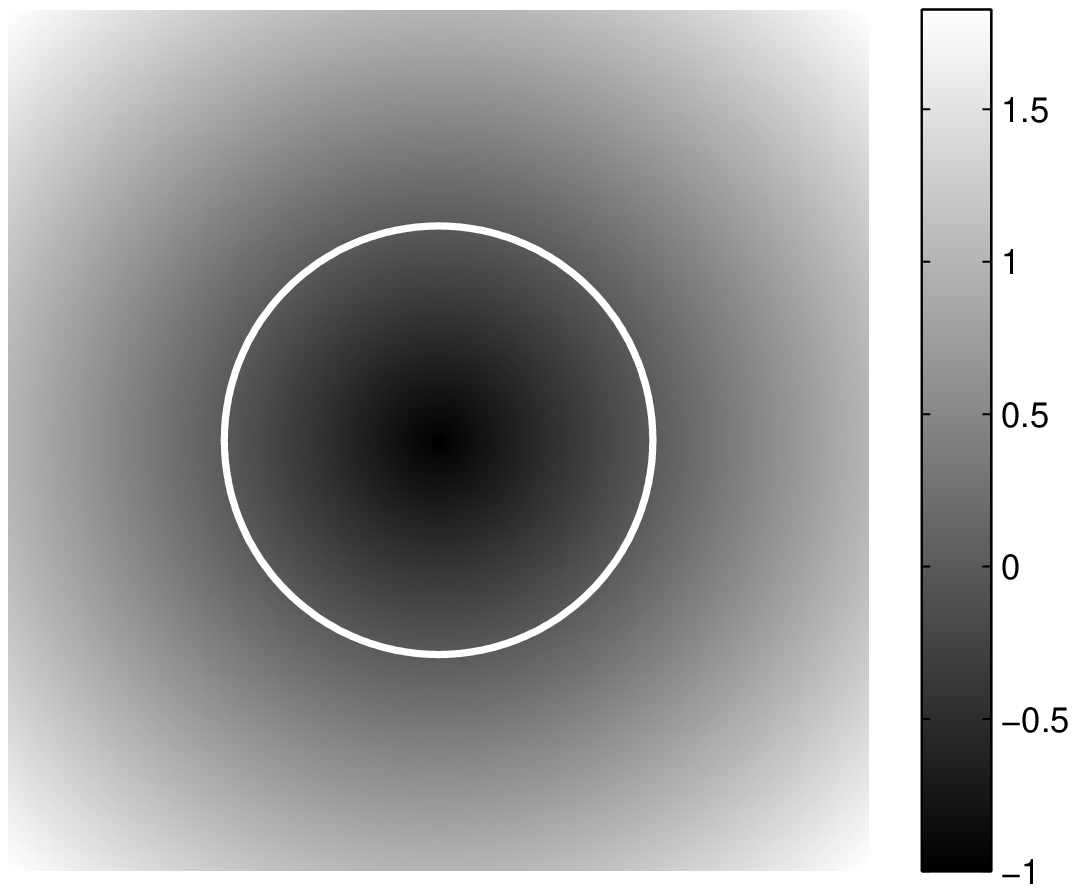}
  	\includegraphics[scale= .3]{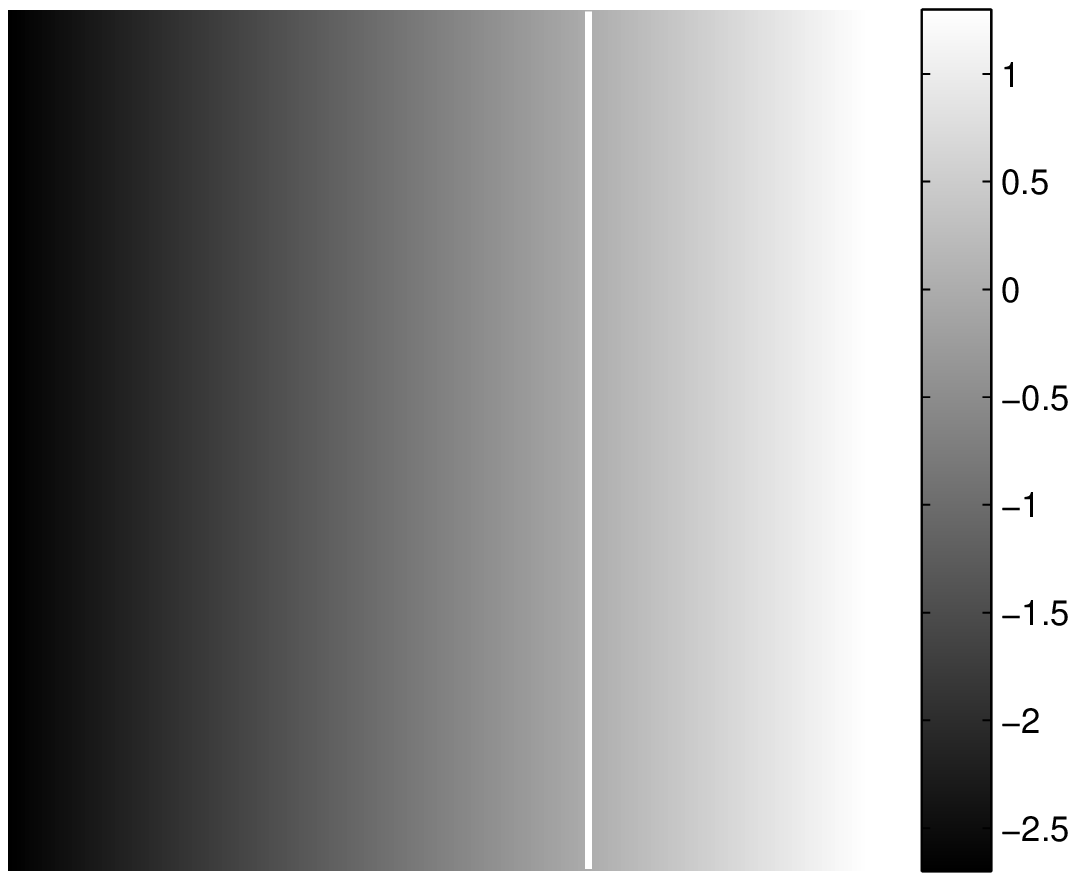}
	\includegraphics[scale= .3]{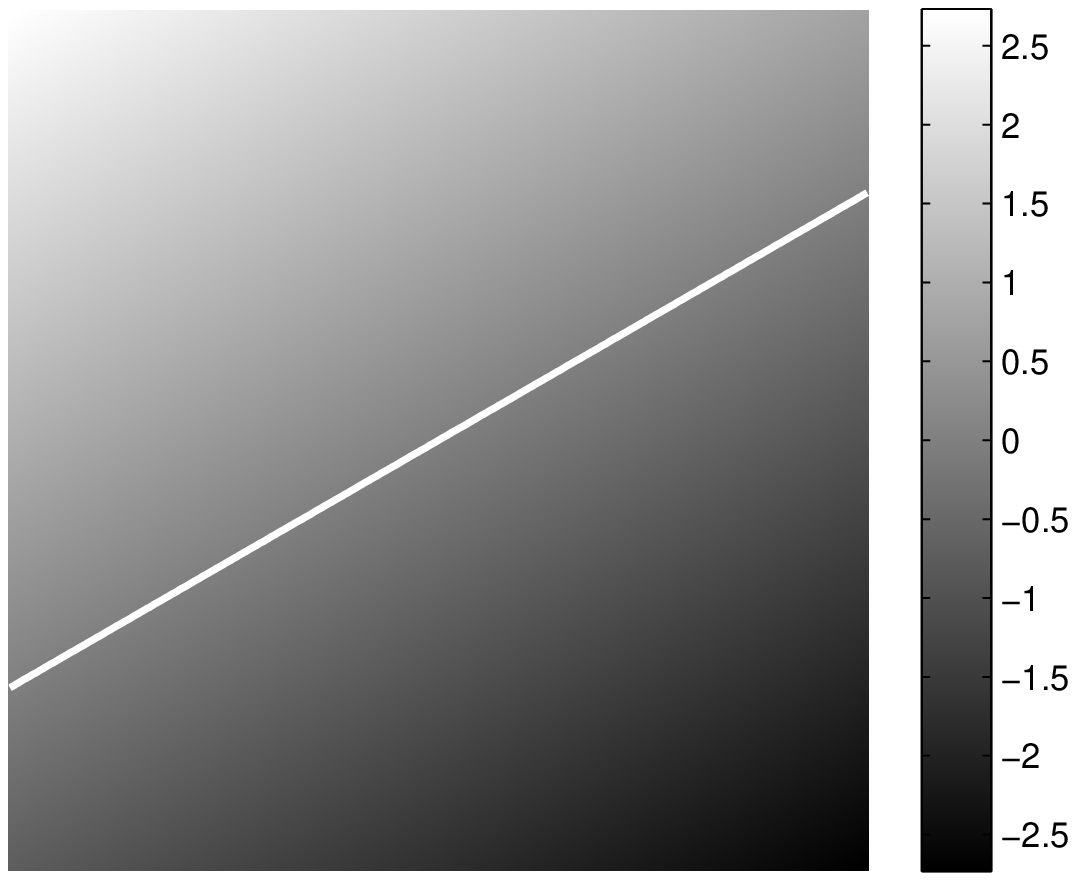}    
}}
\caption{From left to right: The ODF of a singleton $\lbrace \theta \rbrace$, a unit disc, a half-plane and an upper half-plane in $\RofN{2}$; the zero-level isocontour (white) shows the boundary of the set.} \label{fig:exODFs}
\end{figure}

\section{Expectations of Random Sets via ODFs}
\label{sec:ODFexp}

\subsection{Definition and Properties}

For an r.c.s. $\boldA$ with $\partial \boldA \neq \emptyset$, the ODF $b_{\boldA}(x)$ at point $x\in\RofN{d}$ is a random variable taking values in $\mathbb{R}$. Denote by $\matexp{b_{\boldA}(x)}$ the expected value of $b_{A}(x)$, assuming it exists. We begin by proving some properties of the expected ODF.

\begin{prop}\label{odfLip}
Consider an r.c.s. $\boldA$ such that $\partial \boldA \neq \emptyset$ a.s. Then,
\begin{enumerate}
\item \label{prop:odfELip} The expected ODF is uniformly Lipshitz with constant 1. 
\item \label{prop:odfExp} If $\matexp{|b_{\boldA}(x_{0})|}<\infty$ for some $x_{0}\in \calD$, then $\matexp{|b_{\boldA}(x)|}<\infty$ for any $x\in \calD$. 
\item \label{prop:odfGrad} If $b_{\boldA}(x)$ is differentiable at some $x_{0}\in \calD$ a.s., then $\nabla \matexp{b_{\boldA}(x_{0})}$ exists and $|\nabla \matexp{b_{\boldA}(x_{0})}|\leq 1$. 
\end{enumerate}
\end{prop}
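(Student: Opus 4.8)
The plan is to derive all three claims from the deterministic bounds in Property~\ref{odf:prop6} — the constant-one Lipschitz estimate for $b_{A}$ and the a.e.\ gradient bound $|\nabla b_{A}|\le 1$ — combined with linearity of expectation and a single dominated-convergence argument. I would treat part~\ref{prop:odfExp} first, then~\ref{prop:odfELip}, then~\ref{prop:odfGrad}.

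For part~\ref{prop:odfExp}, apply the Lipschitz bound pointwise in $\omega$: $|b_{\boldA}(x)|\le |b_{\boldA}(x_{0})|+|x-x_{0}|$ a.s., and take expectations to get $\matexp{|b_{\boldA}(x)|}\le \matexp{|b_{\boldA}(x_{0})|}+|x-x_{0}|<\infty$. This also shows that, under the hypothesis of part~\ref{prop:odfExp}, the map $x\mapsto\matexp{b_{\boldA}(x)}$ is a finite function on $\calD$, which is tacitly what parts~\ref{prop:odfELip} and~\ref{prop:odfGrad} need. For part~\ref{prop:odfELip}, fix $x,y$; Property~\ref{odf:prop6} gives $-|y-x|\le b_{\boldA}(y)-b_{\boldA}(x)\le |y-x|$ a.s., and since both random variables are integrable, taking expectations of both inequalities yields $|\matexp{b_{\boldA}(y)}-\matexp{b_{\boldA}(x)}|\le|y-x|$.

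For part~\ref{prop:odfGrad}, first note that $\nabla b_{\boldA}(x_{0})$, being an a.s.\ limit of measurable difference quotients, is a random vector with $|\nabla b_{\boldA}(x_{0})|\le 1$ a.s., so $L:=\matexp{\nabla b_{\boldA}(x_{0})}$ is well defined and $|L|\le\matexp{|\nabla b_{\boldA}(x_{0})|}\le 1$. It then remains to identify $L$ as the Fréchet gradient of $\matexp{b_{\boldA}(\cdot)}$ at $x_{0}$. For $h\neq 0$ set
$$R_{\boldA}(h)=\frac{b_{\boldA}(x_{0}+h)-b_{\boldA}(x_{0})-\nabla b_{\boldA}(x_{0})\cdot h}{|h|},$$
so that linearity of expectation gives $\bigl(\matexp{b_{\boldA}(x_{0}+h)}-\matexp{b_{\boldA}(x_{0})}-L\cdot h\bigr)/|h|=\matexp{R_{\boldA}(h)}$. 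A.s.\ differentiability of $b_{\boldA}$ at $x_{0}$ gives $R_{\boldA}(h)\to 0$ a.s.\ as $h\to 0$, while the Lipschitz bound together with $|\nabla b_{\boldA}(x_{0})\cdot h|\le|h|$ gives $|R_{\boldA}(h)|\le 2$ for every $h$. For any sequence $h_{n}\to 0$, dominated convergence (constant dominating function $2$) yields $\matexp{R_{\boldA}(h_{n})}\to 0$; as the sequence is arbitrary, $\matexp{R_{\boldA}(h)}\to 0$ as $h\to 0$, which says exactly that $\matexp{b_{\boldA}(\cdot)}$ is differentiable at $x_{0}$ with $\nabla\matexp{b_{\boldA}(x_{0})}=L$, and $|L|\le 1$ as already shown.

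The only step requiring genuine care is the interchange of limit and expectation in part~\ref{prop:odfGrad}. It is ultimately painless because the a.s.\ Lipschitz constant one furnishes a bound on the remainder quotient $R_{\boldA}(h)$ that is uniform in both $\omega$ and $h$, so dominated convergence applies with a constant; everything else reduces to linearity of expectation and the deterministic ODF bounds.
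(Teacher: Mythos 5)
Your proof is correct and follows essentially the same route as the paper: pointwise use of the deterministic Lipschitz and gradient bounds for $b_{\boldA}$, then monotonicity/linearity of expectation, with differentiation under the expectation in part~\ref{prop:odfGrad}. The only difference is that you make the interchange of limit and expectation explicit via dominated convergence (with the constant bound $2$ on the remainder quotient), a step the paper merely asserts.
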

\begin{proof}
\eqref{prop:odfELip} For any $x,y\in \RofN{d}, \; |\matexp{b_{\boldA}(y)}- \matexp{b_{\boldA}(x)}|\leq \matexp{|b_{\boldA}(y)-b_{\boldA}(x)|}\leq |y-x|$.

\eqref{prop:odfExp} Using the triangle inequality, boundedness of $\calD$ and Lipshiptz continuity of $\matexp{b_{\boldA}(x)}$, we obtain $\matexp{|b_{\boldA}(x)|}\leq |x-x_{0}| + \matexp{|b_{\boldA}(x_{0})|}<\infty$. 

\eqref{prop:odfGrad} Because $\nabla b_{\boldA}(x_{0})$ exists a.s. and $|\nabla b_{A}(x_{0})| \leq 1$, we can interchange the differentiation on the space $\Omega$ and integration with respect to $x$, so the existence of $\nabla \matexp{b_{\boldA}(x_{0})}$ follows. From $\nabla b_{\boldA}(x_{0})\leq 1$, we have $|\nabla \matexp{b_{\boldA}(x_{0})}|\leq  \matexp{|\nabla b_{\boldA}(x_{0})|}\leq 1$.
\end{proof}

\begin{defn}[ODF expectation] \label{def:sdistMeanSet}
Let $\boldA$ be a random closed set such that $\partial \boldA\neq \emptyset$ a.s. and suppose that $\matexp{b_{\boldA}(x_{0})}<\infty$ for some $x_{0}\in\calD$. Then the expectation of an r.c.s. $\boldA$ is the zero-level set of the expected ODF and expectation of the boundary of $\boldA$ is the zero-level isocontour of the expected ODF, i.e.
\begin{eqnarray}\label{def:odfMean}
\matexp{\boldA}& = & \lbrace x:
\matexp{b_{\boldA}(x)}\leq 0\rbrace, \\
\matexp{\partial \boldA}& = & \lbrace x: \matexp{b_{\boldA}(x)}=0\rbrace.
\end{eqnarray}
\end{defn}
\noindent We refer to $\matexp{\boldA}$ as ``the expected set" or ``the ODF expectation"  of $\boldA$. Note, that the expected set defined by (\ref{def:sdistMeanSet}) can be an empty set. Later we give examples of such random sets and discuss the appropriateness of the outcome.

\begin{thm}[Inclusion properties]\label{thm1}
    Consider r.c.s. $\boldA, \mathbf{B} \subset \cl{D}$ with $\partial \boldA, \partial \boldB\neq \emptyset$ and assume that $\matexp{b_{\boldA}(x)}$ and $\matexp{b_{\boldB}(x)}$ exist for some $x \in \cl{\calD}$. That is, suppose $\matexp{\boldA}$ and $\matexp{\boldB}$ are well-defined. Then,
\begin{enumerate}
    \item\label{thm1:itemClSet} The expectation $\matexp{\boldA}$ is a closed set.
    \item\label{thm1:itemDetSet} If $\boldA = A$ a.s., for some deterministic $A\in\calF$, then $\matexp{\boldA}= A$ and $\matexp{\partial \boldA}=\partial {A}$.
        \item\label{thm1:itemInclB} Suppose that an r.c.s. $\boldA$ satisfies $\partial \boldA=B$ a.s. for some deterministic set $B$, then $\matexp{\partial \boldA} \supset B$.     
    \item\label{thm1:itemInclBb} For an r.c.s. $\boldA, \; \partial \matexp{\boldA} \subset \matexp{\partial \boldA}$.
    \item\label{thm1:itemInclODF}  If $\matexp{b_{\boldA}(x)} \geq \matexp{b_{\boldB}(x)}$ a.s., then $\matexp{\boldA} \subset \matexp{\boldB}$.
    \item\label{thm1:itemInclAB} If $\boldA \subset \boldB$ a.s., then $\matexp{\boldA} \subset \matexp{\boldB}$.
\end{enumerate}
\end{thm}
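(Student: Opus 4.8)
The plan is to prove the six items mostly directly from the definition of the ODF expectation (Definition 3.2) together with the properties of ODFs (Proposition 2.1) and the Lipschitz/expectation facts just established in Proposition 3.1. I would handle them in roughly the stated order, since several later items lean on earlier ones.

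For item \ref{thm1:itemClSet}, I would argue that $x \mapsto \matexp{b_{\boldA}(x)}$ is continuous (indeed Lipschitz with constant $1$, by Proposition \ref{odfLip}\eqref{prop:odfELip}), so $\matexp{\boldA} = \{x : \matexp{b_{\boldA}(x)} \le 0\}$ is the preimage of the closed set $(-\infty, 0]$ under a continuous map, hence closed. For item \ref{thm1:itemDetSet}, if $\boldA = A$ a.s.\ then $\matexp{b_{\boldA}(x)} = b_A(x)$ pointwise, and Proposition \ref{property:sdist1:iii} (items \ref{odf:prop1}) gives $\matexp{\boldA} = \{x : b_A(x) \le 0\} = \cl{A} = A$ since $A$ is closed, and similarly for the boundary. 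For item \ref{thm1:itemInclODF}, if $\matexp{b_{\boldA}(x)} \ge \matexp{b_{\boldB}(x)}$ for all $x$, then $\matexp{b_{\boldB}(x)} \le 0$ whenever $\matexp{b_{\boldA}(x)} \le 0$, so $\matexp{\boldA} \subset \matexp{\boldB}$ immediately. Item \ref{thm1:itemInclAB} then follows by combining item \ref{thm1:itemInclODF} with Proposition \ref{odf:prop2}: $\boldA \subset \boldB$ a.s.\ gives $b_{\boldA}(x) \ge b_{\boldB}(x)$ a.s.\ (note $\boldA \subset \boldB$ gives $\cl{\boldA}\subset\cl{\boldB}$, which by \ref{odf:prop2} yields $b_{\boldB}\le b_{\boldA}$), hence $\matexp{b_{\boldA}(x)} \ge \matexp{b_{\boldB}(x)}$ by monotonicity of expectation, and item \ref{thm1:itemInclODF} applies.

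For item \ref{thm1:itemInclBb}, I would take $x \in \partial \matexp{\boldA}$. Since $\matexp{\boldA}$ is closed with $x$ on its boundary, there are points arbitrarily close to $x$ outside $\matexp{\boldA}$, i.e.\ with $\matexp{b_{\boldA}} > 0$, and (if $x$ is in the interior of $\matexp{\boldA}$'s closure, or by using that $x\in\matexp{\boldA}$) points with $\matexp{b_{\boldA}} \le 0$; continuity of $\matexp{b_{\boldA}(\cdot)}$ then forces $\matexp{b_{\boldA}(x)} = 0$, so $x \in \matexp{\partial \boldA}$. The cleanest route is: $x \in \partial\matexp{\boldA}$ means $x$ is a limit of points in $\matexp{\boldA}$ (so $\matexp{b_{\boldA}(x)} \le 0$ by closedness) and a limit of points in $(\matexp{\boldA})^c$ (so $\matexp{b_{\boldA}(x)} \ge 0$ by continuity), giving equality. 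For item \ref{thm1:itemInclB}: if $\partial\boldA = B$ a.s., then $b_{\boldA}(x) = 0$ for every $x \in B$ a.s.\ by Proposition \ref{odf:prop1}, hence $\matexp{b_{\boldA}(x)} = 0$ for $x \in B$, so $B \subset \matexp{\partial\boldA}$.

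The main obstacle I anticipate is item \ref{thm1:itemInclBb} — specifically being careful about the edge case where $\matexp{\boldA}$ has empty interior or where $\partial\matexp{\boldA}$ behaves pathologically, and making sure the ``limit of points in the complement'' step is genuinely valid for every boundary point (it is, by definition of topological boundary: $x \in \partial S$ iff $x \in \cl S \cap \cl{S^c}$). A secondary subtlety is item \ref{thm1:itemInclB}, where one should note the inclusion $B \subset \matexp{\partial\boldA}$ can be strict — the expected boundary can be strictly larger — which is consistent with the statement and worth a one-line remark. Everything else is essentially a direct consequence of continuity of the expected ODF plus the level-set and monotonicity properties of ODFs, so I would keep those arguments to one or two lines each.
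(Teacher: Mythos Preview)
Your proposal is correct and follows essentially the same approach as the paper's proof: continuity of $\matexp{b_{\boldA}}$ for item~\ref{thm1:itemClSet}, the level-set description for items~\ref{thm1:itemDetSet} and~\ref{thm1:itemInclB}, direct monotonicity for item~\ref{thm1:itemInclODF}, and the ODF inclusion property plus item~\ref{thm1:itemInclODF} for item~\ref{thm1:itemInclAB}. Your treatment of item~\ref{thm1:itemInclBb} is in fact more thorough than the paper's, which simply asserts that $x\in\partial\matexp{\boldA}$ implies $\matexp{b_{\boldA}(x)}=0$ without spelling out the two-sided limit argument you give; your version is the right level of detail.
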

\begin{proof}
\eqref{thm1:itemClSet} The closedness of $\matexp{\boldA}$ follows from the continuity of $\matexp{b_{\boldA}}$.

\eqref{thm1:itemDetSet} is straightforward.

\eqref{thm1:itemInclB} It suffices to note that for all points $x\in B, \ b_{B}(x)=b_{\boldA}(x)=0$, so that $\matexp{b_{\boldA}(x)} = 0$ and $B\subset \matexp{\partial \boldA}$. 

\eqref{thm1:itemInclBb} For any $x\in \partial \matexp{\boldA}, \matexp{b_{\boldA}(x)}=0$ and hence $x\in \matexp{\partial \boldA}$. Example \ref{ex:setBoundary} shows that the reverse relation does not hold. 

\eqref{thm1:itemInclODF} For any $x\in \matexp{\boldA}, \ 0\geq \matexp{b_{\boldA}(x)} \geq \matexp{b_{\boldB}(x)}$. Thus $x\in \matexp{\boldB}$ and $\matexp{\boldA} \subset \matexp{\boldB}$. 

\eqref{thm1:itemInclAB} The proof follows immediately from the previous statement and Property \ref{odf:prop3} of the ODFs, which implies that for
$\boldA \subset \boldB, \; b_{\boldA}(x)\geq b_{\boldB}(x)$ in $\calD$ and hence, $\matexp{b_{\boldA}(x)}\geq \matexp{b_{\boldB}(x)}$ a.s..
\end{proof}

\begin{cor}\label{thm1:itemInclUW} If an r.c.s. $\boldA\supset U$ a.s. for some deterministic set $U\subset \cl{\calD}$, then
        $\matexp{\boldA} \supset \cl{U}.$ Similarly, if an r.c.s. $\boldA \subset W$ a.s. for some deterministic set $W\subset \cl{\calD}$,
        then $\matexp{\boldA} \subset \cl{W}$.
\end{cor}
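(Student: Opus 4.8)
The plan is to reduce both inclusions to two facts already available: the pointwise anti-monotonicity of the ODF in the set (Property~\ref{odf:prop2}) and its level-set characterization $\cl A = \{x : b_A(x)\le 0\}$ (Property~\ref{odf:prop1}), combined with monotonicity of expectation. Before doing anything I would dispose of the degenerate cases: if $U=\emptyset$ the claim $\cl U\subset\matexp{\boldA}$ is vacuous, and if $\boldA\subset W$ a.s.\ then $W\neq\emptyset$ because $\partial\boldA\neq\emptyset$ a.s.\ forces $\boldA\neq\emptyset$ a.s. In all remaining cases $U$ and $W$ are nonempty bounded (being contained in $\cl\calD$), hence proper, subsets of $\RofN{d}$, so $\partial U\neq\emptyset$ and $\partial W\neq\emptyset$ and the deterministic ODFs $b_U$, $b_W$ are well-defined and finite-valued.

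For the first inclusion I would apply Property~\ref{odf:prop2} pathwise to $U\subset\boldA(\omega)$ (valid since $\partial\boldA(\omega)\neq\emptyset$ a.s.) to get $b_{\boldA}(x)\le b_U(x)$ for all $x$, almost surely; taking expectations, which exist by hypothesis while $b_U(x)$ is a finite constant, gives $\matexp{b_{\boldA}(x)}\le b_U(x)$ for every $x$. Then for $x\in\cl U$ Property~\ref{odf:prop1} gives $b_U(x)\le 0$, so $\matexp{b_{\boldA}(x)}\le 0$, i.e.\ $x\in\matexp{\boldA}$; hence $\cl U\subset\matexp{\boldA}$. Equivalently, one may view $U$ as a constant r.c.s., note $\matexp{U}=\{x:b_U(x)\le 0\}=\cl U$ by Property~\ref{odf:prop1}, and invoke Theorem~\ref{thm1}\eqref{thm1:itemInclAB}. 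The second inclusion is the mirror image: Property~\ref{odf:prop2} applied to $W\supset\boldA(\omega)$ yields $b_W(x)\le b_{\boldA}(x)$ a.s., hence $b_W(x)\le\matexp{b_{\boldA}(x)}$; so any $x\in\matexp{\boldA}$ satisfies $b_W(x)\le 0$, i.e.\ $x\in\cl W$.

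I do not expect a genuine obstacle, but two points need care. First, $U$ and $W$ need not be closed, so I cannot quote Theorem~\ref{thm1}\eqref{thm1:itemDetSet} (stated for closed deterministic sets) to identify their expectations; instead the closure must be carried through explicitly via the level-set identity of Property~\ref{odf:prop1}, which is precisely why the conclusion involves $\cl U$ and $\cl W$ rather than $U$ and $W$. Second, one must justify that $b_U$ and $b_W$ are defined at all, which reduces to the elementary observation that a nonempty bounded subset of $\RofN{d}$ has nonempty boundary. Everything else is a one-line application of monotonicity of the integral.
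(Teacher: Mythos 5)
Your proof is correct and in substance coincides with the paper's, whose entire argument is to cite items \eqref{thm1:itemDetSet} and \eqref{thm1:itemInclAB} of Theorem \ref{thm1} (expectation of a deterministic set plus monotonicity) --- exactly the ``equivalent'' route you mention, while your main argument merely unfolds those items into the underlying pointwise ODF inequalities. Your additional care with possibly non-closed $U$ and $W$ (passing to $\cl{U}$, $\cl{W}$ via the level-set identity, and checking nonempty boundaries) correctly addresses a detail the paper's one-line proof glosses over, but it does not constitute a different approach.
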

\begin{proof} Follows from statements \eqref{thm1:itemDetSet} and \eqref{thm1:itemInclAB} of Theorem \ref{thm1}.
\end{proof}

Although the inclusion properties of Theorem \ref{thm1} and Corollary \ref{thm1:itemInclUW} seem natural, they do not necessarily hold for other definitions, including the selection expectation \citep{Molchanov2005} and the distance-average definition. The latter does not satisfy a weaker version of inclusion in the sense that the resulting expectation is not necessarily contained in the convex hull of the union of all possible realizations of a random set \citep{BaddeleyMolchanov1998}.

\begin{ex}\label{ex:rsingleton} (random singleton)
Let $\xi$ be a random variable and consider an r.c.s. $\boldA  = \lbrace \xi \rbrace$. Then, for any $x\in \calD, \; \matexp{b_{\boldA}(x)} = \matexp{|x-\xi| >0}$ and hence $\matexp{\boldA} = \emptyset$. 
\end{ex}

The last result does not agree with suggestions in \citet{Molchanov2005}, where $\matexp{\lbrace \xi \rbrace}= \lbrace \matexp{\xi}\rbrace$ is listed as one of the desirable properties of the expected set. Indeed, the selection expectation satisfies the suggested property and so does the distance-average definition, for certain choices of $\calm, f$ and $W$ \citep[see Example~5.4] {BaddeleyMolchanov1998}. However, the ODF definition of the expected set was primarily motivated by problems in image analysis, where random singletons are often regarded as noise. Example \ref{ex:rsingleton} highlights the natural denoising property of the ODF expectation in a sense that any speckles, unless observed a.s., are not present in the average reconstruction. We argue that if random singletons are the sets of primary interests, they should be viewed as random vectors rather than random sets, in which case conventional definitions of the expectation and dispersion are applicable. 

More generally, Example \ref{ex:rsingleton} can be extended to random sets with zero Lebesgue measure. In particular, for a r.c.s. $\boldA$ with $\interior{\boldA}=\emptyset$ and $\lambda(\boldA) = 0$, we have $b_{\boldA}(x)>0$ for all $x\in \boldA^{c}$ and $b_{\boldA}(x) =0$ for all $x\in \partial \boldA$. Thus, unless $x\in \partial \boldA$ almost surely, $\matexp{b_{\boldA}(x)}>0$ and the expected set $\matexp{\boldA} = \lbrace x \colon x\in \boldA \textrm{ almost surely}\rbrace$.

In the rest of this section, we discuss equivariance properties of the ODF expectation. To begin with, recall that the translation of a set $A\subset\RofN{d}$ by $a\in\RofN{d}$ is the set $a+A=\lbrace a+x \colon x\in A\rbrace$. Similarly, the homothecy of $A$ by a scalar $\alpha$ is given by
$\alpha A=\lbrace  \alpha x \colon x\in A \rbrace$. The homothecy is a dilation for $\alpha\geq 1$, a contraction for $\alpha\in [0,1)$, and a reflection for $\alpha =-1$. For negative $\alpha$, the homothecy is a dilation or a contraction of the reflection for $|\alpha|<-1$ and $|\alpha|\in (0,1)$, respectively.
Rigid motion transformations $\textg \colon \RofN{d}\mapsto\RofN{d}$ are given by $\textg (x)=\Gamma x+a$, where $\Gamma \in O(n)$ is an orthogonal matrix and $a \in \RofN{d}$. In $\RofN{d}$, these transformations are isometries,  i.e. $|x-y| = |\textg (x)-\textg (y)|$, and form the Euclidean group $E(d)$ with respect to the composition. 

\begin{thm} [Equivariance properties] \label{thm:equiv} The expected set and the expected boundary are equivariant with respect to: 
\begin{enumerate}
\item \label{prop:dilation} Homothecy, i.e. for a fixed scalar $\alpha\neq 0$,
\begin{equation}\label{dilation:set}
\matexp{\alpha \boldA}=\alpha \matexp{\boldA} \quad \textrm{and} \quad \matexp{\partial\alpha \boldA}=\alpha \matexp{\partial \boldA}.
\end{equation}
\item \label{prop:rigid} The group of rigid motions, i.e. for any $\textg\in E(d)$,
\begin{equation}\label{rigid:set}
\matexp{\textg \boldA}=\textg \matexp{\boldA} \quad \textrm{and} \quad \matexp{\textg \; \partial \boldA}=\textg \matexp{\partial \boldA}.
\end{equation}
\end{enumerate}
\end{thm}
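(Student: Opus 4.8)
The plan is to reduce both parts of Theorem~\ref{thm:equiv} to a single pointwise identity for the ODF under the transformation in question, and then push that identity through the expectation operator and the zero-level-set construction of Definition~\ref{def:sdistMeanSet}. So the proof splits into: (i) a transformation law for $b_{\boldA}$; (ii) passing to expectations via Proposition~\ref{odfLip}; and (iii) unwinding $\matexp{\cdot}$ as a sublevel set.

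First I would establish how the ODF transforms. For a rigid motion $\textg(x)=\Gamma x+a\in E(d)$, the map $\textg$ is a bijective isometry of $\RofN{d}$ whose inverse $\textg^{-1}(x)=\Gamma^{\mtranspose}(x-a)$ is again a rigid motion. Since $\inf_{y\in\textg A}|x-y|=\inf_{z\in A}|x-\textg(z)|=\inf_{z\in A}|\textg^{-1}(x)-z|$, we get $d_{\textg A}(x)=d_A(\textg^{-1}(x))$; and because $\textg$ is a homeomorphism, $(\textg A)^c=\textg(A^c)$, so the same computation gives $d_{(\textg A)^c}(x)=d_{A^c}(\textg^{-1}(x))$. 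Subtracting, $b_{\textg\boldA}(x)=b_{\boldA}(\textg^{-1}(x))$ pointwise, and $\partial(\textg\boldA)=\textg(\partial\boldA)\neq\emptyset$ a.s. For a homothecy $x\mapsto\alpha x$, $\alpha\neq0$, the same argument — now using $\inf_{z\in A}|x-\alpha z|=|\alpha|\inf_{z\in A}|x/\alpha-z|$ and $(\alpha A)^c=\alpha(A^c)$ — gives $d_{\alpha A}(x)=|\alpha|\,d_A(x/\alpha)$ and $d_{(\alpha A)^c}(x)=|\alpha|\,d_{A^c}(x/\alpha)$, hence $b_{\alpha\boldA}(x)=|\alpha|\,b_{\boldA}(x/\alpha)$. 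It is important here to keep the factor $|\alpha|$, not $\alpha$, which is precisely what makes the reflection case $\alpha=-1$ come out correctly.

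Next I would take expectations. By Proposition~\ref{odfLip}(\ref{prop:odfExp}), $\matexp{b_{\boldA}(\cdot)}$ is finite (and Lipschitz) on $\cl\calD$, so the Step~1 identities survive integration over $\Omega$: $\matexp{b_{\textg\boldA}(x)}=\matexp{b_{\boldA}(\textg^{-1}(x))}$ and $\matexp{b_{\alpha\boldA}(x)}=|\alpha|\,\matexp{b_{\boldA}(x/\alpha)}$, and in particular $\matexp{\textg\boldA}$ and $\matexp{\alpha\boldA}$ are themselves well defined. Applying Definition~\ref{def:sdistMeanSet} then gives
\[
\matexp{\textg\boldA}=\{x:\matexp{b_{\boldA}(\textg^{-1}(x))}\leq 0\}=\textg\bigl(\{y:\matexp{b_{\boldA}(y)}\leq 0\}\bigr)=\textg\,\matexp{\boldA},
\]
and, using $|\alpha|>0$ to delete the prefactor,
\[
\matexp{\alpha\boldA}=\{x:\matexp{b_{\boldA}(x/\alpha)}\leq 0\}=\alpha\,\matexp{\boldA}.
\]
Replacing ``$\leq 0$'' by ``$=0$'' throughout yields the boundary statements $\matexp{\textg\,\partial\boldA}=\textg\,\matexp{\partial\boldA}$ and $\matexp{\partial\alpha\boldA}=\alpha\,\matexp{\partial\boldA}$.

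I do not anticipate a serious obstacle; the only real content is the bookkeeping that $d_A$ and $d_{A^c}$ transform in the same way so that their difference $b_A$ does, together with the observation that the homothecy prefactor is $|\alpha|$ (so that neither the sign of $b_A$ nor the inclusion $\cl A=\{b_A\leq0\}$ is affected by a reflection). The mild technical points to watch are that the transformed r.c.s.\ retains a nonempty boundary and remains inside the domain on which $\matexp{b_{\boldA}}$ was assumed finite, so that $\matexp{\textg\boldA}$ and $\matexp{\alpha\boldA}$ are well defined before the identities above are invoked.
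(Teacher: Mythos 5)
Your proposal is correct and follows essentially the same route as the paper: derive the pointwise transformation laws $b_{\textg A}(x)=b_{A}(\textg^{-1}(x))$ and $b_{\alpha A}(x)=|\alpha|\,b_{A}(x/\alpha)$, pass them through the expectation, and read off the zero-level set and isocontour. The only difference is presentational — the paper verifies the set equalities by two explicit inclusions while you use a direct change of variables in the level set, and you spell out the boundary case and the well-definedness points that the paper leaves as ``similarly/omitted.''
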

\begin{proof} 
\eqref{prop:dilation} We begin by establishing the relation between the ODFs of a set and its homothecy. In particular,
$d_{\alpha A}(x)=\inf_{y \in \alpha A} |y-x|=\inf_{y\in A}|\alpha y-x|=|\alpha| d_A (x/\alpha).$ Similarly,  $d_{(\alpha A)^c}(x)=|\alpha | d_{A^c}(x/\alpha)$,  so that $b_{\alpha A}(x)=|\alpha | b_A(x/\alpha).$

For any $y\in \matexp{\alpha \boldA}, \; \matexp{b_{\alpha \boldA}(y)} \leq 0$. Writing $y=\alpha x$ for some $x$, we obtain
$0\geq \matexp{b_{\alpha \boldA}(y)}=\matexp{b_{\alpha \boldA}(\alpha x)} = |\alpha |\matexp{b_{\boldA}(x)}.$
Thus, $\matexp{b_{\boldA}(x)}\leq 0$ and $y=\alpha x$ with $x\in \matexp{\boldA}$. Hence, $y\in \alpha \matexp{\boldA}$ and $\matexp{\alpha \boldA} \subseteq \alpha \matexp{\boldA}$.

For any $y\in \alpha \matexp{\boldA}, \; y=\alpha x$ for some $x\in \matexp{\boldA}$. Then $\matexp{b_{\alpha \boldA}(y)} = \matexp{|\alpha |b_{\boldA}(y/\alpha)}=|\alpha | \matexp{b_{\boldA}(x)}\leq 0$. Thus, $y \in \matexp{\alpha \boldA}$ and $ \matexp{\alpha \boldA} \supseteq \alpha \matexp{\boldA}$.

The equivariance property for the boundary is proved similarly and we omit the details. 

\eqref{prop:rigid} A transformation $\textg(x)=\Gamma x +a$, for some orthogonal matrix $\Gamma\in O(d)$ and $a\in\RofN{d}$, is a bijection and $\textg^{-1}(x)=\Gamma^{-1}(x-b)=\Gamma^{\textrm{\tiny T}}(x-b)$. For sets
$\textg \boldA = \lbrace \textg(x) \colon x\in \boldA \rbrace \quad \textrm{and} \quad (\textg \boldA)^c = \lbrace \textg(x) \colon x\in {\boldA}^c\rbrace,$
we obtain $d_{\textg A}(x) = \inf_{y\in \textg A}|y-x|=\inf_{y\in A} |\Gamma y+b-x| = \inf_{y \in A} |y-\Gamma^{\mtranspose}(x-b)|=d_A(\textg^{-1}(x))$.
Similarly, $d_{(\textg A)^c}(x)= d_{A^c} (\textg^{-1}(x)) $ and $b_{\textg A}= b_A (\textg ^{-1}(x))$.

Recall that $\textg \matexp{\boldA} = \lbrace \textg (x) \colon x\in \matexp{\boldA}\rbrace$ and $\matexp{\textg \boldA} = \lbrace x \colon \matexp{b_{\textg \boldA}(x)}\leq 0 \rbrace.$ Hence, for any $y\in \textg \matexp{\boldA}$, there exists $x\in \matexp{\boldA}$ such that $y=\textg(x)$. From here, $\matexp{b_{\textg \boldA}(y)}=\matexp{b_{\boldA}(\textg^{-1}(\Gamma x+b))}=\matexp{b_{\boldA}(x)}\leq 0.$ Hence, $y \in \matexp{\textg \boldA}$ and $\matexp{\textg \boldA} \supseteq \textg \matexp{\boldA}$.

On the other hand, any $y \in \matexp{\textg \boldA}$ can be written as $y=\Gamma x +b$, where $x=\textg^{-1}(y)$. It then follows that $0\geq \matexp{b_{\textg \boldA}(y)}=\matexp{b_{\boldA}(\textg^{-1}(\textg(x)))} = \matexp{b_{\boldA}(x)}.$ Thus, any $y \in \matexp{\textg \boldA}$ has a form $y=\Gamma x+b$, where $x\in \matexp{\boldA}$. Hence, $\matexp{\textg \boldA} \subseteq \textg \matexp{\boldA}$ and $\matexp{\textg \boldA} = \textg \matexp{\boldA}$.
\end{proof}

\begin{cor}\label{cor:transl}
Note that any translation in $\RofN{d}$ can be described as an isometry with an identity matrix $\Gamma$. Thus, it immediately follows that the expected set and the expected boundary are translation-equivariant, i.e. for some fixed $a\in \RofN{d}$,
\begin{equation}\label{prop:transl}
\matexp{a+\boldA} = a+\matexp{\boldA} \quad \textrm{and} \quad \matexp{\partial (a+\boldA)} = a+\matexp{\partial \boldA}.
\end{equation}
\end{cor}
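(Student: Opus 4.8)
The plan is to obtain this statement as an immediate specialization of the rigid-motion equivariance already established in Theorem~\ref{thm:equiv}\eqref{prop:rigid}. A translation by a fixed vector $a\in\RofN{d}$ is the affine map $\textg(x)=\Gamma x+a$ with $\Gamma$ equal to the $d\times d$ identity matrix; since the identity matrix is orthogonal, $\textg\in E(d)$. Consequently the conclusions $\matexp{\textg\boldA}=\textg\matexp{\boldA}$ and $\matexp{\textg\,\partial\boldA}=\textg\matexp{\partial\boldA}$ hold verbatim, and unwinding the notation $\textg(x)=x+a$ yields exactly \eqref{prop:transl}. No further work is required beyond this observation.

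For completeness, one could equally give a short self-contained argument in the style of the homothecy case. First note directly from Definition~\ref{def:dist} that $d_{a+A}(x)=d_A(x-a)$ and $d_{(a+A)^c}(x)=d_{A^c}(x-a)$, so that $b_{a+A}(x)=b_A(x-a)$; taking expectations gives $\matexp{b_{a+\boldA}(x)}=\matexp{b_{\boldA}(x-a)}$. Then, reading off Definition~\ref{def:sdistMeanSet}, a point $y$ lies in $\matexp{a+\boldA}$ iff $\matexp{b_{\boldA}(y-a)}\leq 0$, i.e. iff $y-a\in\matexp{\boldA}$, which is precisely $\matexp{a+\boldA}=a+\matexp{\boldA}$; the boundary identity follows by replacing the inequality ``$\leq 0$'' with the equality ``$=0$'' throughout.

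There is essentially no obstacle here. The only point needing a moment's care is the verification that the identity rotation places a translation inside $E(d)$, which is trivial, and the corollary is recorded separately from Theorem~\ref{thm:equiv} only because translation-equivariance is the property most frequently invoked in the applications that follow.
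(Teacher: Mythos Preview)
Your proposal is correct and matches the paper's own justification exactly: the corollary is obtained by noting that a translation is the rigid motion $\textg(x)=\Gamma x+a$ with $\Gamma$ the identity, so Theorem~\ref{thm:equiv}\eqref{prop:rigid} applies directly. The additional self-contained argument you supply is not in the paper but is a harmless and correct elaboration.
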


\begin{thm}[Convexity preservation]\label{thm:convx}
Suppose that an r.c.s. $\boldA$ is convex a.s.. Then the expectation $\matexp{\boldA}$ of $\boldA$ is also convex. 
\end{thm}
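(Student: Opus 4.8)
The plan is to push convexity through the ODF representation: first show that the expected ODF $x\mapsto\matexp{b_{\boldA}(x)}$ is a convex function on $\RofN{d}$, and then conclude that $\matexp{\boldA}$, being the $0$-sublevel set of this function, is convex.

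First I would recall that by Property~\ref{odf:prop5}, whenever a realization $\boldA(\omega)$ is convex with $\partial\boldA(\omega)\neq\emptyset$, the sample path $x\mapsto b_{\boldA(\omega)}(x)$ is a convex function on $\RofN{d}$. Under the hypothesis this holds for every $\omega$ outside a fixed $\bP$-null set $N$, so for all $x,y\in\RofN{d}$, all $\lambda\in[0,1]$, and all $\omega\notin N$,
\[
b_{\boldA(\omega)}\bigl(\lambda x+(1-\lambda)y\bigr)\leq \lambda\, b_{\boldA(\omega)}(x)+(1-\lambda)\,b_{\boldA(\omega)}(y).
\]
Since $\matexp{\boldA}$ is assumed well-defined, $\matexp{b_{\boldA}(x_0)}$ is finite for some $x_0\in\calD$, and then by Proposition~\ref{odfLip}\eqref{prop:odfELip}--\eqref{prop:odfExp} the expectation $\matexp{b_{\boldA}(x)}$ is finite for every $x$. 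Taking expectations in the displayed inequality and using monotonicity and linearity of $\matexpb$ gives
\[
\matexp{b_{\boldA}\bigl(\lambda x+(1-\lambda)y\bigr)}\leq \lambda\,\matexp{b_{\boldA}(x)}+(1-\lambda)\,\matexp{b_{\boldA}(y)},
\]
i.e. $x\mapsto\matexp{b_{\boldA}(x)}$ is convex on $\RofN{d}$.

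Finally I would take $x,y\in\matexp{\boldA}$, so that $\matexp{b_{\boldA}(x)}\leq 0$ and $\matexp{b_{\boldA}(y)}\leq 0$; the inequality above yields $\matexp{b_{\boldA}(\lambda x+(1-\lambda)y)}\leq 0$ for every $\lambda\in[0,1]$, hence $\lambda x+(1-\lambda)y\in\matexp{\boldA}$. Thus $\matexp{\boldA}$ is convex; recalling that $\matexp{\boldA}=\lbrace x:\matexp{b_{\boldA}(x)}\leq 0\rbrace$ is closed by Theorem~\ref{thm1}\eqref{thm1:itemClSet}, it is in fact a closed convex set, and the same argument applied with $\leq$ replaced pointwise shows there is nothing further to check for the boundary statement.

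There is no serious obstacle here: the proof amounts to ``convexity is preserved by pointwise expectation and by passing to sublevel sets.'' The only point needing a word of care is the interaction of the a.s. qualifier with the integration step. Because ``$\boldA$ convex a.s.'' guarantees that the sample-path convexity inequality holds for \emph{all} $x,y,\lambda$ simultaneously off the single null set $N$, and because $\matexp{b_{\boldA}(\cdot)}$ is finite everywhere by Proposition~\ref{odfLip}, integrating the inequality for fixed $x,y,\lambda$ is legitimate; the measurability of $b_{\boldA}(x)$ as a random variable is already implicit in the standing assumption that $\matexp{b_{\boldA}(x)}$ exists.
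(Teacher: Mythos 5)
Your proposal is correct and follows essentially the same route as the paper: both invoke the convexity of the sample-path ODF for a.s.\ convex $\boldA$ (Property~\ref{odf:prop5}, i.e.\ Theorem~7.1 of Delfour--Zol\'esio), take expectations of the convexity inequality, and conclude that $\matexp{\boldA}$ is a sublevel set of the convex function $\matexp{b_{\boldA}}$. Your added remarks on finiteness of $\matexp{b_{\boldA}(x)}$ and the handling of the null set merely make explicit details the paper leaves implicit.
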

\begin{proof}
The a.s. convexity of $\boldA$ implies that $\cl{\boldA}$ is convex and $b_{\boldA} = b_{\cl{\boldA}}$ is a convex function a.s. \citep[Theorem 7.1]{DelfourZolesio2001}. Hence, for all $\alpha\in[0,1], \; b_{\boldA}(\alpha x +(1-\alpha)y)\leq \alpha b_{\boldA}(x)+(1-\alpha)b_{\boldA}(y)$ a.s.. Taking expectations, we obtain that $\matexp{b_{\boldA}}$ is convex, and so is $\matexp{\boldA}$. 
\end{proof}

We refer to Theorems \ref{thm:equiv} and \ref{thm:convx} as the shape-preservation properties of the ODF expectation. These qualities are particularly desirable in shape and image analysis. In contrast, given a nonatomic probability space, the selection expectation of an r.c.s. is convex and coincides with the expectation of its convex hull. Notably, the convexification of $\es{\boldA}$ holds even for nonconvex deterministic sets \citep{Molchanov2005}. The distance-average expectation preserves the convexity, but only if the set, its representative function and the window are convex \citep{BaddeleyMolchanov1998}.  

\begin{ex} [Set and its boundary] \label{ex:setBoundary} Consider an r.c.s. $\boldA$ such that $\boldA = \lbrace 0, 1 \rbrace$ or $\boldA = [0,1]$ in $\mathbb{R}$ with probabilty $p$ and $1-p$, respectively. Note that the boundary $\partial\boldA=\lbrace 0, 1 \rbrace$ a.s. in $\mathbb{R}$, so essentially, we are observing either the set or its boundary. The expected ODF is given by 
\begin{displaymath}
\matexp{b_{\boldA}(x)} = \left\{ \begin{array}{ll}
-x, & x \leq 0 \\
(2p-1)x, & 0 < x \leq 0.5 \\
(2p-1)(1-x), & 0.5<x\leq 1 \\
x-1,  & x>1.
\end{array} \right.
\end{displaymath}
For $p=0.5, \; \matexp{\boldA} =  \matexp{\partial \boldA} =  [0,1] $. For $p<0.5, \; \matexp{\boldA} = [0,1]$ and $\matexp{\partial \boldA} = \lbrace 0, 1 \rbrace$. For $p>0.5, \; \matexp{\boldA} = \matexp{\partial \boldA} = \lbrace 0, 1 \rbrace$. Note that for $p=0.5, \; \partial \boldA = \lbrace 0,1 \rbrace$ a.s. However, $\matexp{\partial \boldA} \neq \lbrace 0, 1 \rbrace$, but rather $\partial \boldA \subset \matexp{\partial \boldA}$, as in Proposition \ref{thm1}\eqref{thm1:itemInclB}. 

For $p=0.5$, the distance-average expectation is $\eda{\boldA} = [-1/12, 1/6] \cup [5/6, 13/12]$, for $\calm$ uniform, and $\eda{\boldA} = \lbrace 0, 1 \rbrace$, for $\calm = \Delta_{\calW}^{p}$ with sufficiently large $\calW$ \citep{BaddeleyMolchanov1998}. 

More generally, consider an r.c.s. $\boldA$ such that, for some deterministic $B\in\calF, \; \boldA = B$ with probability $p$ and $\boldA = \partial B$, otherwise. From
\begin{equation*}
b_{\partial B}(x) = \left\{\begin{array}{ll}
b_{B}(x), & x\in B^{c}, \\
0, & x\in\partial B, \\
-b_{B}(x), & x\in \interior{B},
\end{array}\right.
\end{equation*}
it follows that
\begin{equation*}
\matexp{b_{\boldA}(x)} = \left\{\begin{array}{ll}
b_{B}(x), & x\in B^{c}, \\
0, & x\in\partial B, \\
(1-2p)b_{B}(x), & x\in\interior{B}.\\
\end{array}\right.
\end{equation*}
Hence, 
\begin{eqnarray*}
\textrm{for } p & = & 0.5, \quad \matexp{\boldA} = \matexp{\partial\boldA} = B, \\
\textrm{for } p & < & 0.5, \quad \matexp{\boldA} =B, \; \matexp{\partial\boldA} = \partial B, \\
\textrm{for } p & > & 0.5, \quad \matexp{\boldA}  = \matexp{\partial\boldA} = \partial B. \\
\end{eqnarray*} 
\end{ex}

\begin{ex} [Ball with random radius] \label{ex:ball2} Consider a random closed ball $\boldA$ in $\RofN{d}$ with a fixed center $x_{0}$ and a random radius $\Theta$. Recall that the ODF of $\boldA$ is given by $b_{\boldA}(x)=|x-x_{0}|-\Theta \quad \textrm{for all } x \in \RofN{d}$. Hence, $\matexp{b_{\boldA}(x; \Theta)}=|x-x_{0}|-\matexp{\Theta}$ and the expected set is a closed ball centered at $x_{0}$ with radius $\matexp{\Theta}$. The expected boundary $\matexp{\partial \boldA}$ is a sphere centered at $x_{0}$ with radius $\matexp{\Theta}$.
\end{ex}

\begin{ex}[Flashing discs] \label{example:discreteCircle} Let $\Theta$ be a Bernoulli random variable with parameter $p$. Consider an r.c.s. $\boldA \subset\RofN{2}$ given by a disc of radius $r$ centered at the origin, with probability $p$, and centered at a point $a\in \RofN{2}$, otherwise.
The ODF of $\boldA$ is given by $b_{\boldA}(x)=|x|\ind{\Theta=1}+|x-a|\ind{\Theta=0}-r,$ where $\ind{\cdot}$ is an indicator function.
Taking the expectation with respect to $\Theta$, we obtain $\matexp{b_{\boldA}(x)}=p|x|+(1-p)|x-a|-r.$ The isocontours of the expected ODF are Cartesian ovals with foci $\lbrace 0, a \rbrace$.


Figure \ref{fig:fBallsCO} shows the expected ODF with the expected boundary superimposed in white, for $p=0.8$ and three different values of $a$. The observed discs $\boldA$ are shown in gray. When the realizations of $\boldA$ do not intersect, the expected boundary is nearly a circle contained in the leftmost disc (Figure \ref{fig:fBallsCO}, left). As the distance between the centers decreases, the expected set expands. In general, as $p\rightarrow 1$, the expected ODF approaches that of a disc centered at the origin with radius $r$. Alternatively, as $p\rightarrow 1/2$, the isocontours approach an elliptical shape.

\begin{figure}[!hbtp]
{\scriptsize 
\psfrag{0.6}[t][c]{.6} \psfrag{0.8}[][]{} \psfrag{1.2}[][]{} \psfrag{}[][]{} \psfrag{1.4}[][]{} \psfrag{1.6}[][]{} \psfrag{1.8}[][]{} \psfrag{2.2}[][]{}
\psfrag{2.4}[][]{} \psfrag{2.6}[t][c]{2.6}
\psfrag{-0.5}[][]{} \psfrag{0.5}[][]{} \psfrag{1.5}[][]{} \psfrag{-1.5}[][]{} \psfrag{-2.5}[][]{} \psfrag{2.5}[][]{}
\psfrag{1}[t][c]{1} \psfrag{2}[t][c]{2} \psfrag{-1}[t][c]{-1} \psfrag{-2}[t][c]{-2} \psfrag{0}[t][c]{0} \psfrag{3}[t][c]{3} \psfrag{-3}[t][c]{-3}
\centerline{
    \includegraphics [scale=0.24]{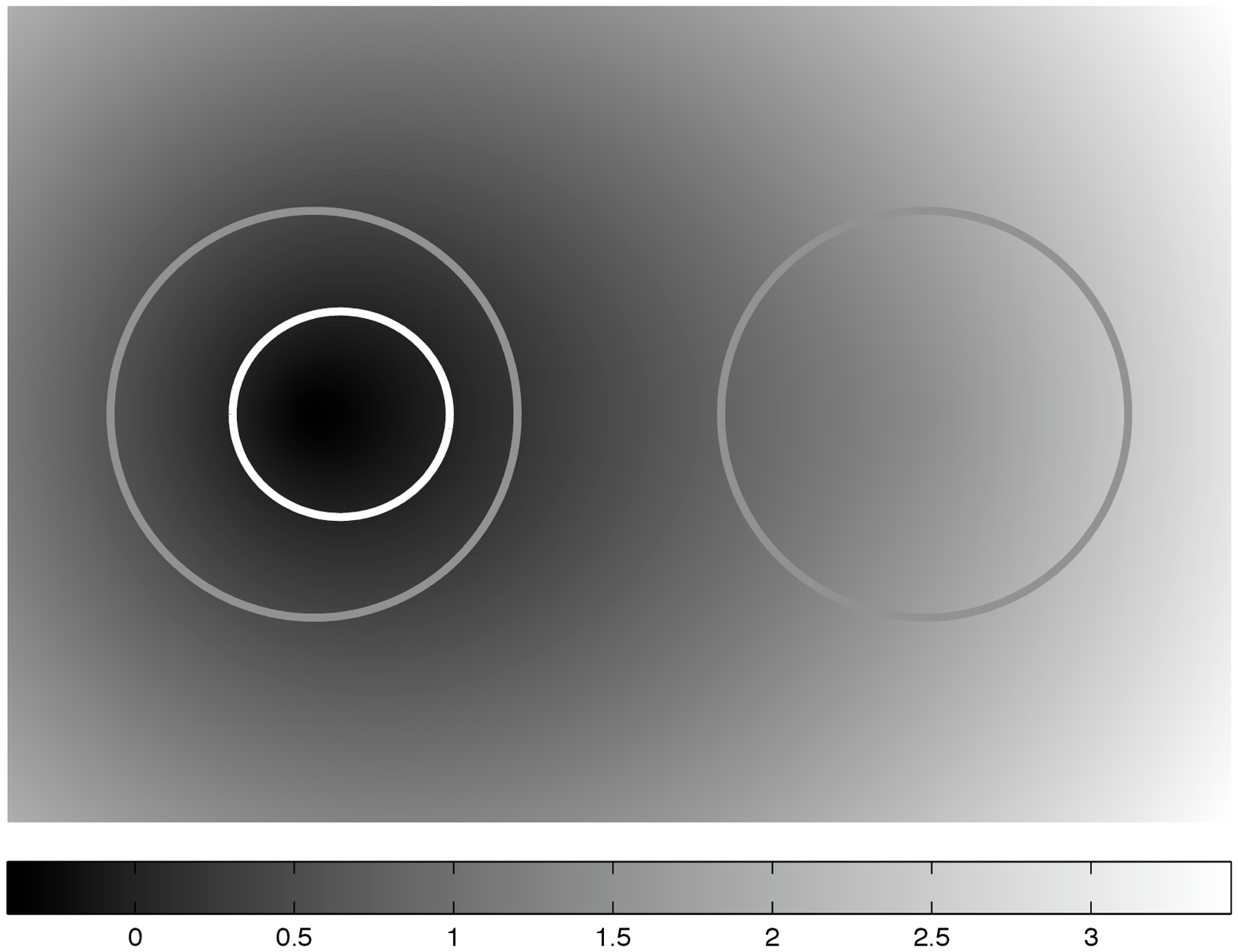} \hspace{3mm}  
    \includegraphics [scale=0.24]{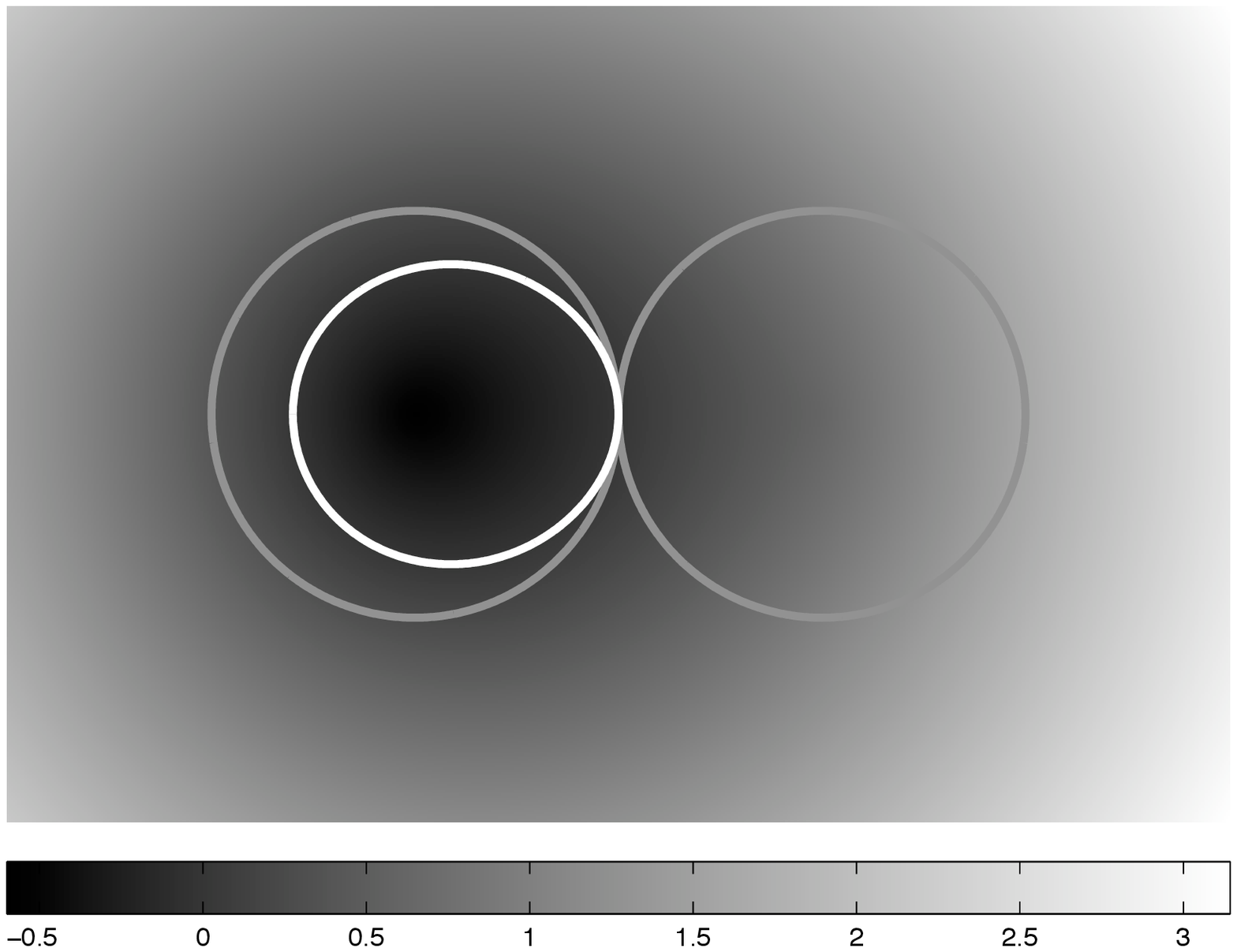} \hspace{3mm} 
    \includegraphics [scale=0.24]{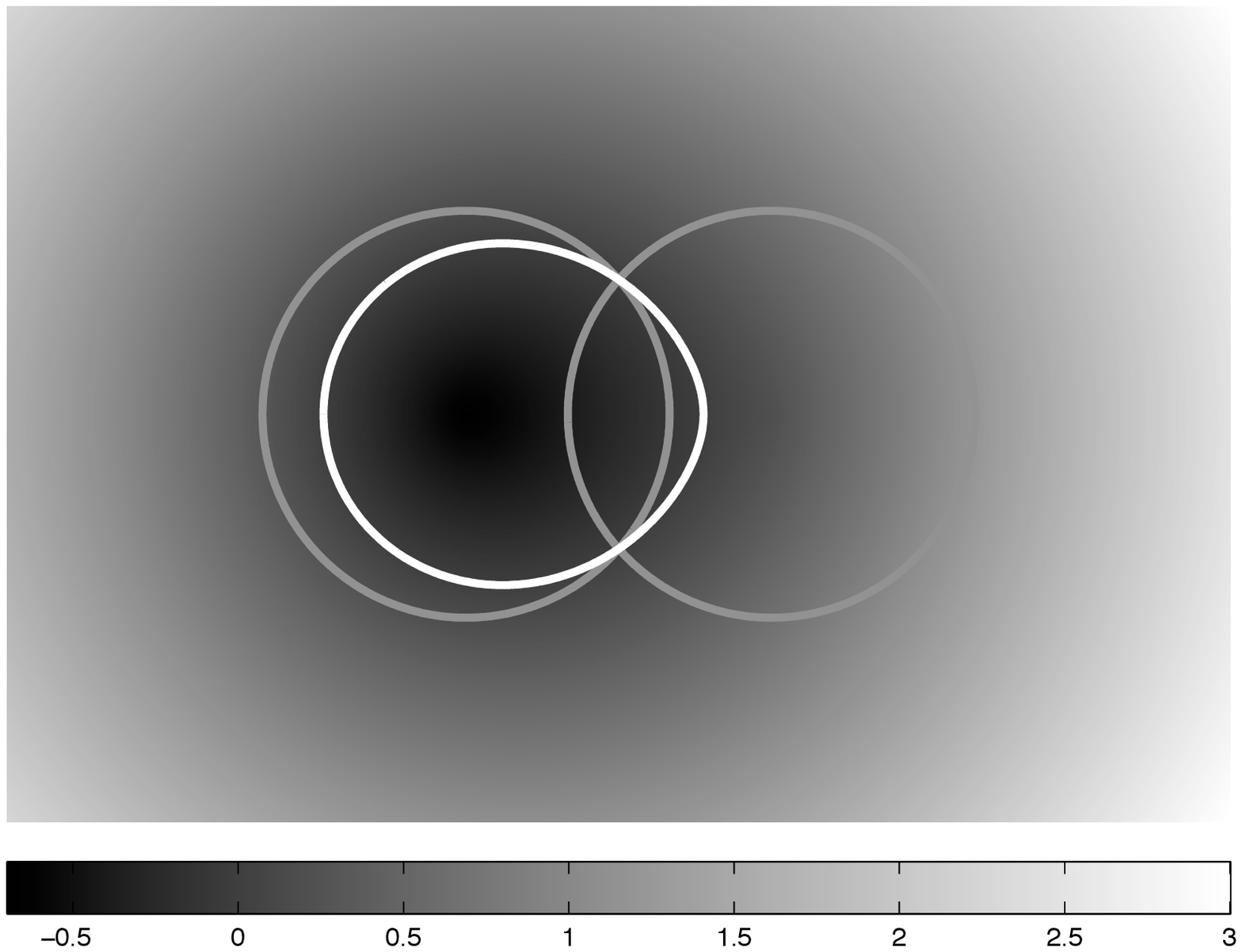} }}
\caption{The expected boundaries (white) for $\boldA$ with $p=0.8, r=1$ and $|a| = 3, 2, 1.5$ (from left to right), superimposed on the gray-scale image of the expected ODF. Realizations of $\partial\boldA$ are shown in white.} \label{fig:fBallsCO}
\end{figure}

For $p=0.5$, the $c\;$-level isocontour of the expected ODF is an ellipse with foci $(0,0)$ and $a$ and semimajor axis $r+c$. From $\matexp{b_{\boldA}(x)}\geq |a|/2-r$, we obtain that if $|a|>2r$, then $\matexp{\boldA} = \emptyset $; if $|a|=2r$, then $\matexp{\boldA} = \lbrace a/2 \rbrace$; if $|a|>2r$, then $\matexp{\boldA}$ is the closure of an ellipse with foci $\lbrace (0,0), a \rbrace $  and a semimajor axis $r$.  Figure \ref{fig:fBallsEll} shows the expected boundaries (white) overlayed on the expected ODF, for discs with $r=1$ and $p=0.5$.

In comparison, the selection expectation $\es{\boldA}$ is always a disc centered at $a/2$ with radius $r$, irrespective of $p$. The Vorob'ev expectation $\ev{\boldA}$ for $p\neq 0.5$ is given by the more likely of the two discs. For $p=0.5$, $\ev{\boldA}$ is the union of the two discs, provided the cardinality of the intersection does not exceed one, otherwise $\ev{\boldA}$ is the intersection of the two discs. 

\begin{figure}[!hbtp]
{\scriptsize 
\psfrag{0.6}[t][c]{.6} \psfrag{0.8}[][]{} \psfrag{1.2}[][]{} \psfrag{}[][]{} \psfrag{1.4}[][]{} \psfrag{1.6}[][]{} \psfrag{1.8}[][]{} \psfrag{2.2}[][]{}
\psfrag{2.4}[][]{} \psfrag{2.6}[t][c]{2.6}
\psfrag{-0.5}[][]{} \psfrag{0.5}[][]{} \psfrag{1.5}[][]{} \psfrag{-1.5}[][]{} \psfrag{-2.5}[][]{} \psfrag{2.5}[][]{}
\psfrag{1}[t][c]{1} \psfrag{2}[t][c]{2} \psfrag{-1}[t][c]{-1} \psfrag{-2}[t][c]{-2} \psfrag{0}[t][c]{0} \psfrag{3}[t][c]{3} \psfrag{-3}[t][c]{-3}
\centerline{
    \includegraphics [scale=0.24]{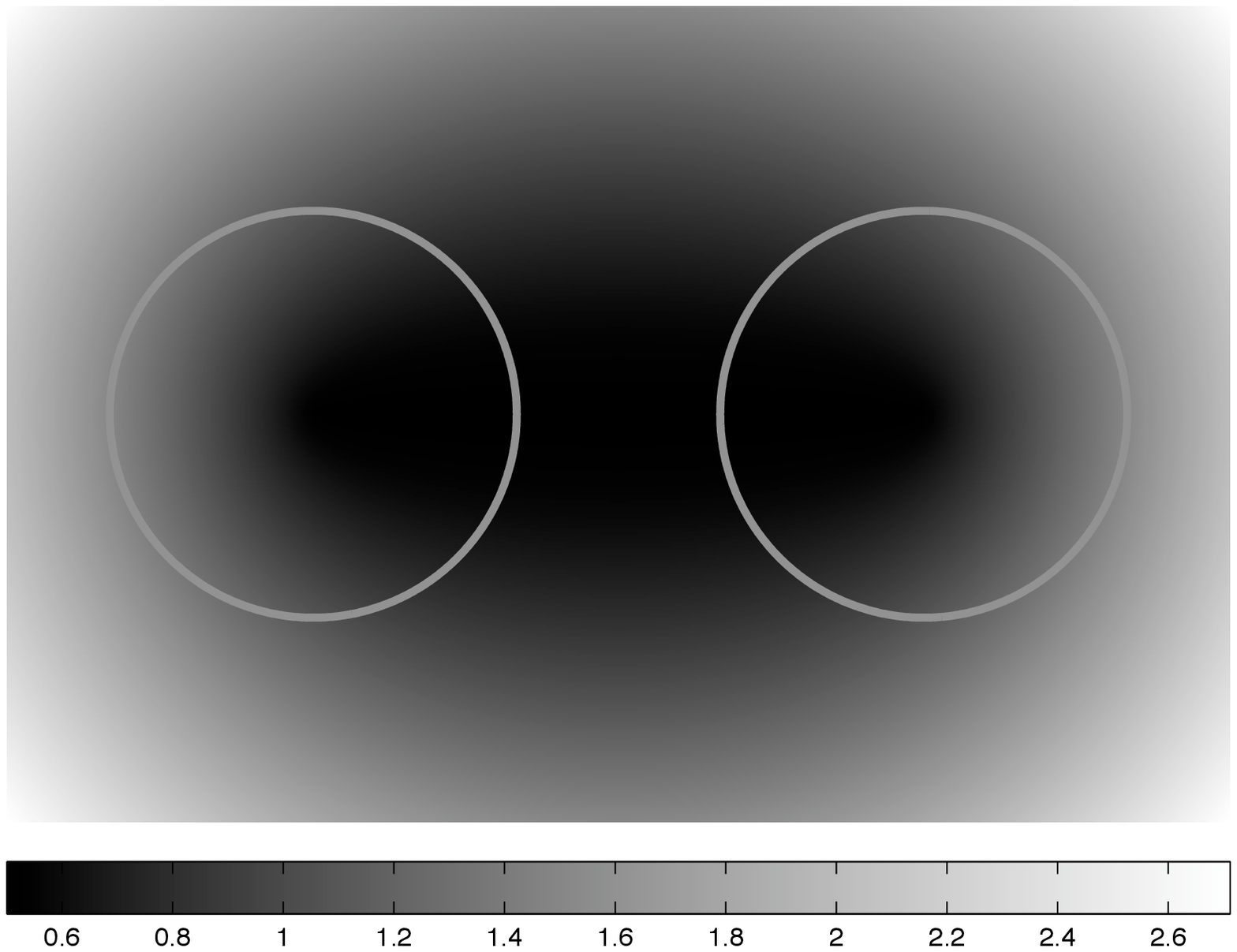} \hspace{3mm}  
    \includegraphics [scale=0.24]{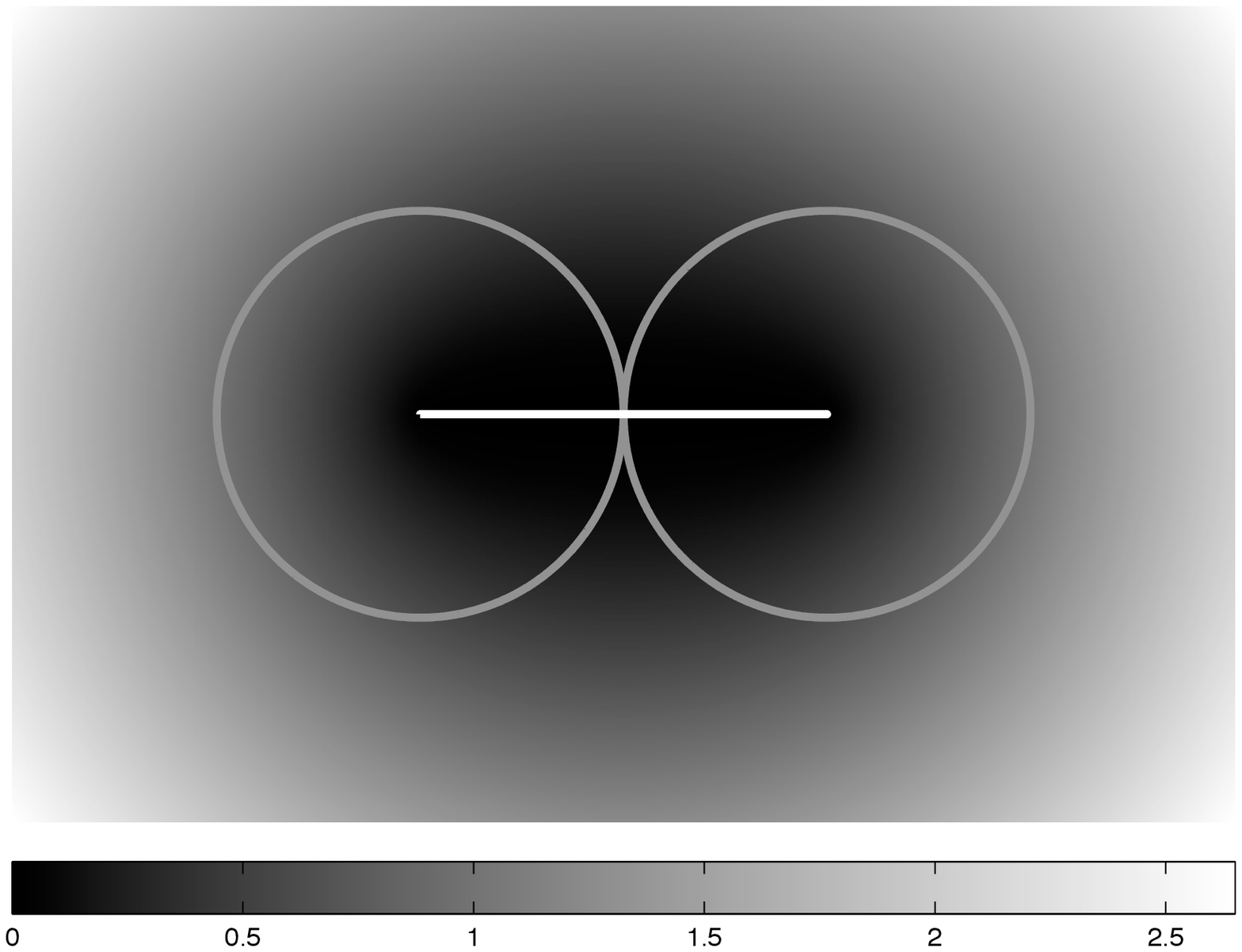} \hspace{3mm} 
    \includegraphics [scale=0.24]{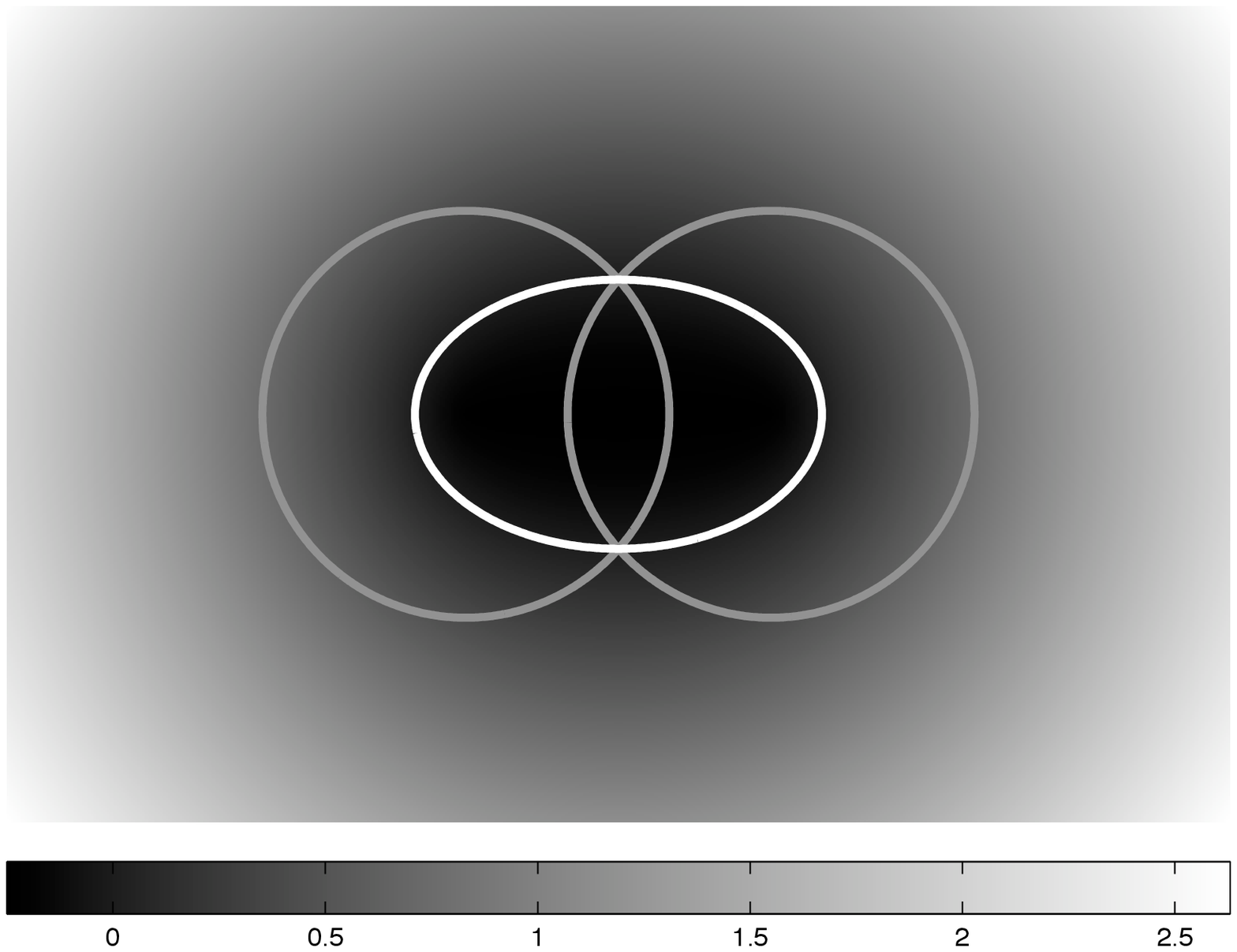} }}
\caption{The expected boundaries (white) for random discs with $p=0.5, r=1$ and $|a| = 3, 2, 1.5$ (from left to right), superimposed on the gray-scale image of the expected ODF. Realized boundaries are shown in gray.} \label{fig:fBallsEll}
\end{figure}
\end{ex}

\begin{ex}[Random half-plane] \label{ex:hplane} Let $\Theta$ be a random variable and consider a random half-plane in $\RofN{2}$ given by
$\boldA=\lbrace x \colon x_1\leq \Theta\rbrace$ with $\partial \boldA=\lbrace x: x_1=\Theta\rbrace.$ The ODF of $\boldA$ is $b_{\boldA}(x)=x_1-\Theta$ and the expectation has a form $ \matexp{b_{\boldA}(x; \Theta)}=x_1-\matexp \Theta.$ From here, the expected set is a half-plane intersecting the horizontal axis at point $\matexp{\Theta}$ and  $\matexp{\partial \boldA}= \lbrace x: x_1=\matexp{\Theta} \rbrace.$ In this example, $\es{\boldA} = \ev{\boldA} = \matexp{\boldA}$. 
\end{ex}

\begin{ex}[Random upper half-plane] \label{example:tplane}
Consider a random upper half-plane in $\RofN{2}$ given by $\boldA=\lbrace x \colon x_2 \geq  x_1 \tan{\Theta}\rbrace$ with $\partial \boldA=\lbrace x \colon
x_2=x_1\tan{\Theta}\rbrace. $ The ODF of $\boldA$ is $b_{\boldA}(x) = |x| \sin{(\Theta-\omega)}$, where $\omega=\arcsin{(x_{2}|x|^{-1})}$.
For $\Theta\sim\unif{a}{b}$ the expected ODF is $\matexp{b_{\boldA(\Theta)}(x)}=2(b-a)^{-1}\sin{((b-a)/2)}|x|\sin{((a+b)/2-\omega)}.$ From here, the expected set is a homothecy with coefficient $\alpha = (b-a)^{-1}(2\sin{(b-a)/2)})^{-1}$ of an upper half-plane with boundary angle $(a+b)/2$. Hence, $\matexp{\boldA}$ is an upper half-plane with boundary given by the line that passes through the origin and makes angle $(a+b)/2$ with the positive horizontal axis. Again, we have $\es{\boldA} = \ev{\boldA} = \matexp{\boldA}$. 
\end{ex}

\section{Separable Oriented Distance Functions}
\label{sec:separable}

Consider a parametric closed set, whose geometry depends on the parameter $\theta\in\RofN{p}$. We call $\theta$ the generating parameter and write $A(\theta)$ for a closed set generated by $\theta$. For example, a disc with a center $\gamma$ and a radius $\rho$ has generating parameters $(\gamma, \rho)$. We now extend the notion of parametric sets to random closed sets. Specifically, let $\Theta$ be a $p$-dimensional ($p\geq 1)$ random variable defined on the probability space $(\Omega, \calP, \bP)$ with $\matexpb{|\Theta|}<\infty$. Consider a random closed set given by a mapping from $\Omega$ to $\calF$, so that the geometry of the set depends on the realized value of $\Theta$. Similarly, we call $\Theta$ the generating parameter of a r.c.s. $\boldA(\Theta)$. For example, a disc with a random center $C$ and a random radius $R$ has generating parameters $\Theta = (C, R)$, whilst a disc with a fixed center and a random radius $R$ has a generating parameter $\Theta = R$. The ODF $b_{\boldA}(x)$ of $\boldA(\Theta)$ is a function of random variable $\Theta$ and so,
is a random quantity. We again assume that $b_{\boldA}(x)$ is integrable for some $x_{0} \in \calD$ and hence is well-defined.

\begin{defn}[Separable ODF] \label{def:sdistSep}
The ODF of set $A = A(\theta)$ with parameter $\theta$ is separable, if it has a form $b_{A}(x; \theta)= h^{T}(x)g(\theta), $
where $h(x) = (h_{1}(x), \ldots, h_{K}(x))^{T}$ and $g(\theta) = (g_{1}(\Theta), \ldots, g_{K}(\theta))^{T}$ for some functions $h_{i}(x), g_{i}(\theta), \; i=1, \ldots, K$.
\end{defn}

\begin{defn}[Separable r.c.s]\label{def:sdistSep}
An r.c.s. $\boldA$ is separable, if there exists a random variable $\Theta$ such that $b_{\boldA}(x; \Theta)$ is separable a.s.. That is
$b_{\boldA}(x; \Theta)= h^{T}(x)g(\Theta) \;\; \textrm{a.s.,}$
with $h(x) = (h_{1}(x), \ldots, h_{K}(x))^{T}$ and $g(\theta) = (g_{1}(\theta), \ldots, g_{K}(\theta))^{T}$ for some functions $h_{i}(x), g_{i}(\theta), \; i=1, \ldots, K$.
\end{defn}

The following theorem establishes the connection between the expectation of a set and its generating parameter.
\begin{thm}\label{proposition2}
Let $\Theta$ be a generating parameter of a separable r.c.s. $\boldA(\Theta)$ and assume that $\matexp{|g_{i}(\Theta)|}<\infty$ for every $i=1, \ldots, K$. Then,
\begin{enumerate}
    \item \label{proposition2:item1} If $g(\theta)$ is convex in every argument, then $\matexp{\boldA(\Theta)} \subset A(\matexp{\Theta})$.
    \item \label{proposition2:item2} If $g(\theta)$ is an affine function of $\theta$ in every component $g_{i}(\theta)$ a.s., then $\matexp{\boldA(\Theta)} = A(\matexp{\Theta})$.
\end{enumerate}
\end{thm}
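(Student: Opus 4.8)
The plan is to work entirely at the level of the expected ODF, using the separability $b_{\boldA}(x;\Theta)=h^{\mtranspose}(x)g(\Theta)$ together with Definition~\ref{def:sdistMeanSet}, which says $\matexp{\boldA(\Theta)}=\lbrace x: \matexp{b_{\boldA}(x;\Theta)}\leq 0\rbrace$ and $A(\matexp{\Theta})=\lbrace x: b_{A}(x;\matexp{\Theta})\leq 0\rbrace$. Since the integrability hypothesis $\matexp{|g_{i}(\Theta)|}<\infty$ lets us pass the expectation through the finite linear combination, we have $\matexp{b_{\boldA}(x;\Theta)}=h^{\mtranspose}(x)\,\matexp{g(\Theta)}$, where $\matexp{g(\Theta)}=(\matexp{g_1(\Theta)},\ldots,\matexp{g_K(\Theta)})^{\mtranspose}$. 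So both sets are sublevel sets of a function that is linear in the $g$-coordinates; the whole argument reduces to comparing $h^{\mtranspose}(x)\matexp{g(\Theta)}$ with $b_{A}(x;\matexp{\Theta})=h^{\mtranspose}(x)g(\matexp{\Theta})$, componentwise.

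For part~\eqref{proposition2:item2}, if each $g_i$ is affine in $\theta$, then by linearity of expectation $\matexp{g_i(\Theta)}=g_i(\matexp{\Theta})$ for every $i$, hence $\matexp{g(\Theta)}=g(\matexp{\Theta})$ and $\matexp{b_{\boldA}(x;\Theta)}=h^{\mtranspose}(x)g(\matexp{\Theta})=b_{A}(x;\matexp{\Theta})$ for every $x$. The two defining inequalities then cut out exactly the same set, giving $\matexp{\boldA(\Theta)}=A(\matexp{\Theta})$; the boundary identity follows the same way from the zero-level sets. For part~\eqref{proposition2:item1}, if $g$ is convex in every argument, Jensen's inequality gives $\matexp{g_i(\Theta)}\geq g_i(\matexp{\Theta})$ for each $i$. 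I would like to conclude $h^{\mtranspose}(x)\matexp{g(\Theta)}\geq h^{\mtranspose}(x)g(\matexp{\Theta})$ and then invoke Theorem~\ref{thm1}\eqref{thm1:itemInclODF} (larger expected ODF gives smaller expected set) to get $\matexp{\boldA(\Theta)}\subset A(\matexp{\Theta})$.

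The main obstacle is exactly that last inference: $h^{\mtranspose}(x)\matexp{g(\Theta)}\geq h^{\mtranspose}(x)g(\matexp{\Theta})$ requires the coordinates $h_i(x)$ to be nonnegative, which is not part of Definition~\ref{def:sdistSep}. The honest fix is to observe that a separable representation is not unique, and that for the ODFs of the parametric families in question (spheres, half-planes, upper half-planes, and their homothecies/rigid motions as in Examples~\ref{example:ODFpoint}--\ref{example:ODFtplane}) one can always choose the representation so that the $h_i$ are nonnegative on $\calD$ — e.g. absorbing signs into $g$ — or, more cleanly, one restricts the statement to representations in which each pair $(h_i,g_i)$ is such that $h_i\geq 0$ on the window; I would state this as a standing assumption when introducing separable ODFs and then the convexity argument goes through coordinatewise without further friction. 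With that in place, summing the inequalities $h_i(x)\matexp{g_i(\Theta)}\geq h_i(x)g_i(\matexp{\Theta})$ over $i$ and applying Theorem~\ref{thm1}\eqref{thm1:itemInclODF} completes the proof, and the strictness/degenerate cases (empty expected set, etc.) are handled automatically since they are already built into Definition~\ref{def:sdistMeanSet}.
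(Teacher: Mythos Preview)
Your argument follows the same route as the paper's: compute $\matexp{b_{\boldA}(x;\Theta)}=h^{\mtranspose}(x)\,\matexp{g(\Theta)}$ by linearity, then compare with $b_{A}(x;\matexp{\Theta})=h^{\mtranspose}(x)\,g(\matexp{\Theta})$ via Jensen's inequality and read off the set inclusion from the sublevel-set definitions. For part~\eqref{proposition2:item2} you and the paper are identical.

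For part~\eqref{proposition2:item1} you have in fact been \emph{more} careful than the paper. The paper's proof simply asserts, ``It follows from Jensen's inequality that $h^{\mtranspose}(x)\,\matexp{g(\Theta)}\geq h^{\mtranspose}(x)\,g(\matexp{\Theta})$,'' and then proceeds to the inclusion exactly as you do. It does not address the point you raise: Jensen applied to each convex $g_i$ gives $\matexp{g_i(\Theta)}\geq g_i(\matexp{\Theta})$, but passing to $\sum_i h_i(x)\,\matexp{g_i(\Theta)}\geq \sum_i h_i(x)\,g_i(\matexp{\Theta})$ requires $h_i(x)\geq 0$, which is not part of Definition~\ref{def:sdistSep}. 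So the obstacle you identify is real and is glossed over in the paper's own argument. Your proposed remedy --- normalising the separable representation so that the $h_i$ are nonnegative (absorbing signs into the $g_i$), or recording $h_i\geq 0$ as a standing hypothesis --- is the natural way to make the step honest, and it is consistent with all of the paper's worked examples.
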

\begin{proof}
(\ref{proposition2:item1}) It follows from Jensen's inequality that $\matexp{b_{\boldA}(x; \Theta)} = h^{T}(x)\matexp{g(\Theta)} \geq
h(x)^{T} g(\matexp{\Theta}).$ The righthand side of the equation is the ODF of a set with parameter
$\matexp{\Theta}$. For any $x\in \matexp{\boldA(\Theta)}$, we have
$h^{T}(x) g(\matexp{\Theta})\leq \matexp{b_{\boldA}(x; \Theta)}\leq 0$ and
$x\in \cl{A(\matexp{\Theta})}=A(\matexp{\Theta}).$ Thus, $\matexp{\boldA(\Theta)} \subset
A(\matexp{\Theta}).$

(\ref{proposition2:item2}) From Jensen's inequality, $\matexp{b_{\boldA}(x; \Theta)}=h^{T}(x) g(\matexp{\Theta})$ iff the function
$g_{i}(\theta)$ is affine in $\theta$ for every $i=1, \ldots, K$. The expression on the right-hand site is the ODF of a set with parameter $\matexp{\Theta}$, hence the statement \eqref{proposition2:item2} of the theorem follows. 

Note that although the linearity of $g(\theta)$ is a necessary and sufficient condition for the form of $\matexp{b_{\boldA}(x; \Theta)}$, it is only a sufficient condition for the separability of $\matexp{\boldA(\Theta)}$. This is because the expected set depends on $\matexp{b_{\boldA}(x; \Theta)}$ up to a constant, so that for fixed $c>0, \matexp{b_{\boldA}(x; \Theta)}$ and $c\matexp{b_{\boldA}(x; \Theta)}$ define the same expected set. 
\end{proof}

\begin{ex}\label{ex:separable}
In Example \ref{ex:ball2}, the ODF of a closed ball with random radius $b_{\boldA}(x) = |x|-\Theta$ is separable with $h(x)=(|x|, -1)^{T}$ and $g(\theta)=(1, \theta)^{T}$, and the expected set is a closed ball with radius $\matexp{\Theta}.$

In Example \ref{example:discreteCircle}, the ODF of flashing discs is not separable, and the geometry of the expected set differs from that of the observed sets. However, note that both the random set and its expectation can be described as Cartesian ovals.

In Example \ref{ex:hplane}, the ODF of a half-plane is separable with $h(x)=(x_{1}, -1)^{T}$ and $g(\theta) = (1, \theta)^{T}$. Hence, the expected set is a half-plane with parameter $\matexp{\Theta}$.

In Example \ref{example:tplane}, the ODF of an upper half-plane can be written as $b_{\boldA}(x) = x_{1}\sin\Theta - x_{2}\cos\Theta$, and so is separable with $h(x)= (x_{1}, -x_{2})^{T}$ and $g(\theta) = (\sin\theta, \cos\theta)^{T}$. The expected set is an upper half-plane whose boundary makes angle $\matexp{\Theta}$ with the positive horizontal axis. This example illustrates the second statement of Theorem \ref{proposition2}, in that $\matexp{\boldA(\Theta)} = A(\matexpb{\Theta})$, even for nonlinear $g$.
\end{ex} 

\begin{rem} Note that if $g$ is invertible in every component, we can reparametrize the expected ODF as $\matexp{b_{\boldA}(x; \Theta)}= h^{T}(x)g(\eta)$, where $\eta_{i} =g_{i}^{-1}(\matexpb{g_{i}(\Theta)}), i=1, \dots, K$. Hence, the expected set and the expected boundary have the same geometric form in terms of parameters $\eta_{i}$. The relation between $\eta_{i}$ and the moments of $\Theta$ is established on a case-by-case basis. 
In particular, for the ODFs of the form $b_{\boldA}(x) = h(x) + g(\Theta)$ for some function $h(x)\colon \RofN{d}\mapsto \mathbb{R}$ and an invertible function $g(\theta)\colon \RofN{p}\mapsto \mathbb{R}$, we obtain that $\matexp{b_{\boldA}(x; \Theta)}=h(x)+ \matexp{g(\Theta)}=h(x)+ g(g^{-1}(\matexp{g(\Theta)}))$ and $\matexp{\boldA(\Theta)} = A(g^{-1}\matexp{g(\Theta)}).$ 
\end{rem}

\begin{rem}
In digital processing, data are stored as an array of pixels and so the ODF of an object in the image is discretised. This discretization implies that for realizations of an r.c.s. $\boldA$ at times $t=1, 2, \ldots$, the ODFs of $\boldA_{t}$ can be viewed as a random sample of arrays in the time domain. For a separable ODF, the space-time covariance matrix has a form
\begin{displaymath}
\cov(b_{\boldA_{t}}(x), b_{\boldA_{\tau}}(y)) = h^{T}(x)\cov(g(\Theta_{t}), g(\Theta_{\tau}))h(y), 
\end{displaymath}
for lattice vertices $x, y$ and time points $t, \tau$. This expression shows that the covariance of a separable ODF is a product of a deterministic spatial component and the covariance of the stochastic time component. In particular, for an ODF of the form $b_{\boldA}(x) = h(x)+g(\Theta)$, the space-time covariance is given by $\cov(b_{\boldA_{t}}(x), b_{\boldA_{\tau}}(y)) = \cov(g(\Theta_{t}), g(\Theta_{\tau}))$, and hence the dependence between the sets can be inferred from their generating parameters. 
\end{rem}
\section{Further Developments and Applications}
\label{sec:applications}

\subsection{The Sample Mean Set}
\label{sec:sampleMean}

Let $\boldA_1, \ldots, \boldA_m$ be the observed realizations of an r.c.s. $\boldA$. For every $i$, we assume that $\boldA_i\subset \cl{\calD}$
and $\partial \boldA_i\neq \emptyset$ and denote by $b_i(x)$ the ODF of $\boldA_i$ at point $x$. Let  $\bar{b}_m(x)=m^{-1}\sum_{i=1}^m b_i(x)$ be the sample mean ODF at point $x$.

\begin{prop}\label{prop:sampleMeanSdist}
  The sample mean ODF $\bar{b}_m(x)$ is uniformly Lipshitz with constant one.
\end{prop}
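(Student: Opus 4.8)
The plan is to deduce this directly from Property~\ref{odf:prop6}, which states that for each $i$ the ODF $b_i = b_{\boldA_i}$ is uniformly Lipschitz on $\RofN{d}$ with constant one; this applies because $\partial\boldA_i \neq \emptyset$ by assumption, so each $b_i$ is well-defined. The sample mean ODF $\bar{b}_m$ is a finite average of the $b_i$, and Lipschitz continuity with a fixed constant is stable under such averaging, so the bound should transfer verbatim.

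Concretely, I would fix arbitrary $x, y \in \RofN{d}$ and estimate
\[
  |\bar{b}_m(y) - \bar{b}_m(x)|
  = \Bigl| \frac{1}{m}\sum_{i=1}^m \bigl(b_i(y) - b_i(x)\bigr) \Bigr|
  \le \frac{1}{m}\sum_{i=1}^m |b_i(y) - b_i(x)|
  \le \frac{1}{m}\sum_{i=1}^m |y-x|
  = |y-x| ,
\]
where the first inequality is the triangle inequality for $|\cdot|$ on $\mathbb{R}$ and the second applies Property~\ref{odf:prop6} to each summand. Since $x$ and $y$ are arbitrary, this is precisely the asserted uniform Lipschitz bound with constant one.

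There is no genuine obstacle here: the argument is a one-line application of the triangle inequality together with the already-established Lipschitz property of a single ODF, and the only hypothesis that needs to be in force is $\partial\boldA_i\neq\emptyset$ for every $i$, which is assumed. It is worth noting that the same computation shows that any convex combination $\sum_i w_i b_i$ (with $w_i\ge 0$ and $\sum_i w_i = 1$) is Lipschitz with constant one, and in particular $\bar{b}_m$ is continuous — a fact that will be used subsequently to guarantee that the empirical mean set, the zero sublevel set of $\bar{b}_m$, is closed.
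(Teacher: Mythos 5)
Your argument is correct and is exactly the paper's proof: the paper simply notes that the claim ``follows immediately from the Lipschitz continuity of $b_i(x)$,'' and your triangle-inequality computation just spells out that one-line deduction. Nothing further is needed.
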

\begin{proof}
The proof follows immediately from the Lipshitz continuity of $b_{i}(x)$.
\end{proof}

\begin{defn}[Sample mean set]\label{def:sdist_empSet}
The sample mean set $\bar{\boldA}_m$ and the sample mean boundary $\partial \bar{\boldA}_m$ are given by the zero-level set and by the zero-level isocontour of the empirical ODF, respectively, i.e.
\begin{equation}\label{eq:sdist_empset}
  \bar{\boldA}_m=\lbrace x: \bar{b}_m(x)\leq 0\rbrace, \quad \partial \bar{\boldA}_m=\lbrace x: \bar{b}_m(x) =0\rbrace.
\end{equation}
\end{defn}
\noindent The sample mean set defined in \eqref{eq:sdist_empset} is a random closed set, since $\lbrace \bar{\boldA}_{m}\cap K \neq \emptyset \rbrace = \lbrace \inf_{x\in K} \bar b_{m}(x) \leq 0 \rbrace$ is measurable. 

\begin{figure}[h]
{\footnotesize \psfrag{samplesize}[t][c]{Sample size, $m$}
\psfrag{error}[b][t]{} \centerline{
    \includegraphics [scale=0.38]{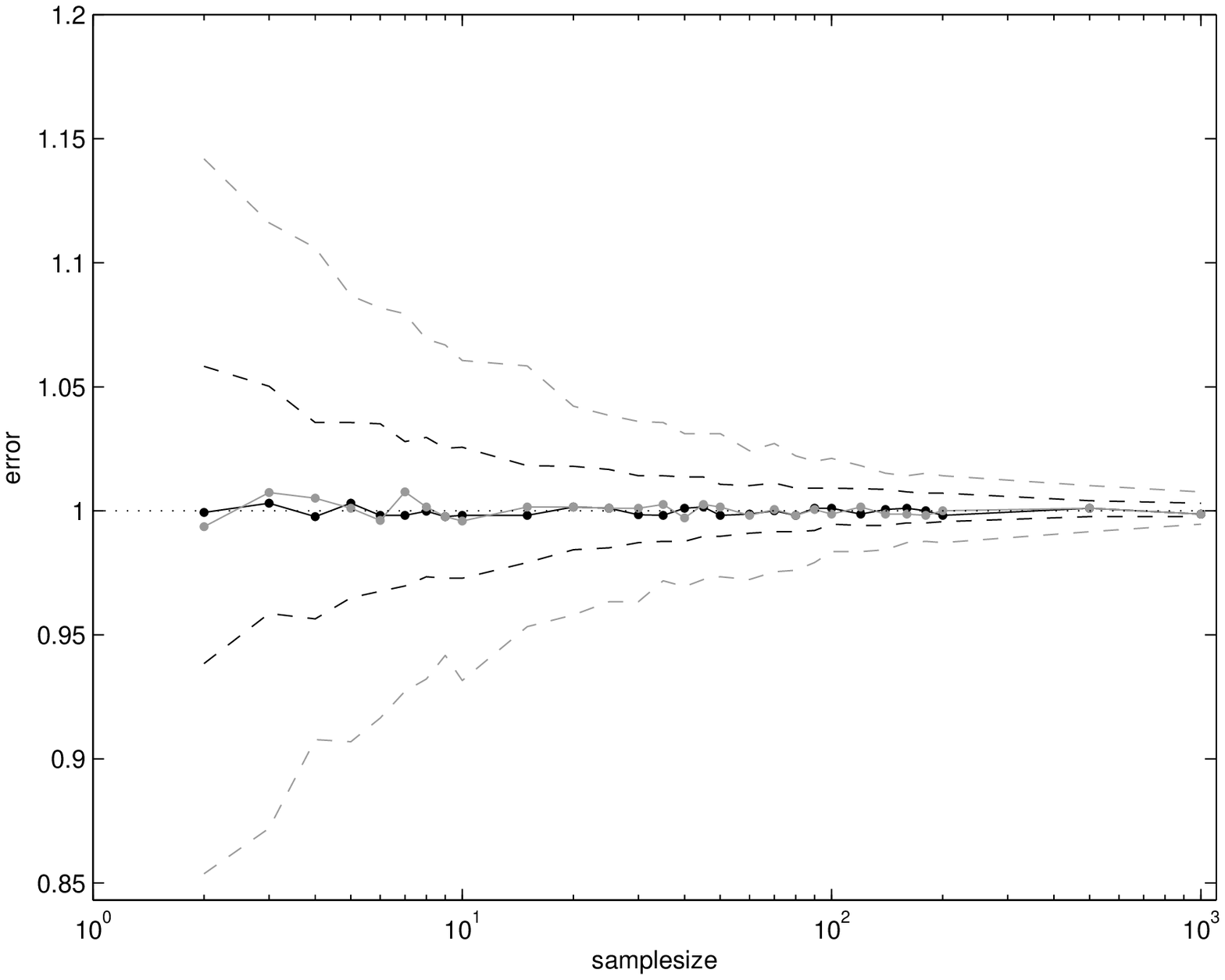}
        \includegraphics [scale=0.38]{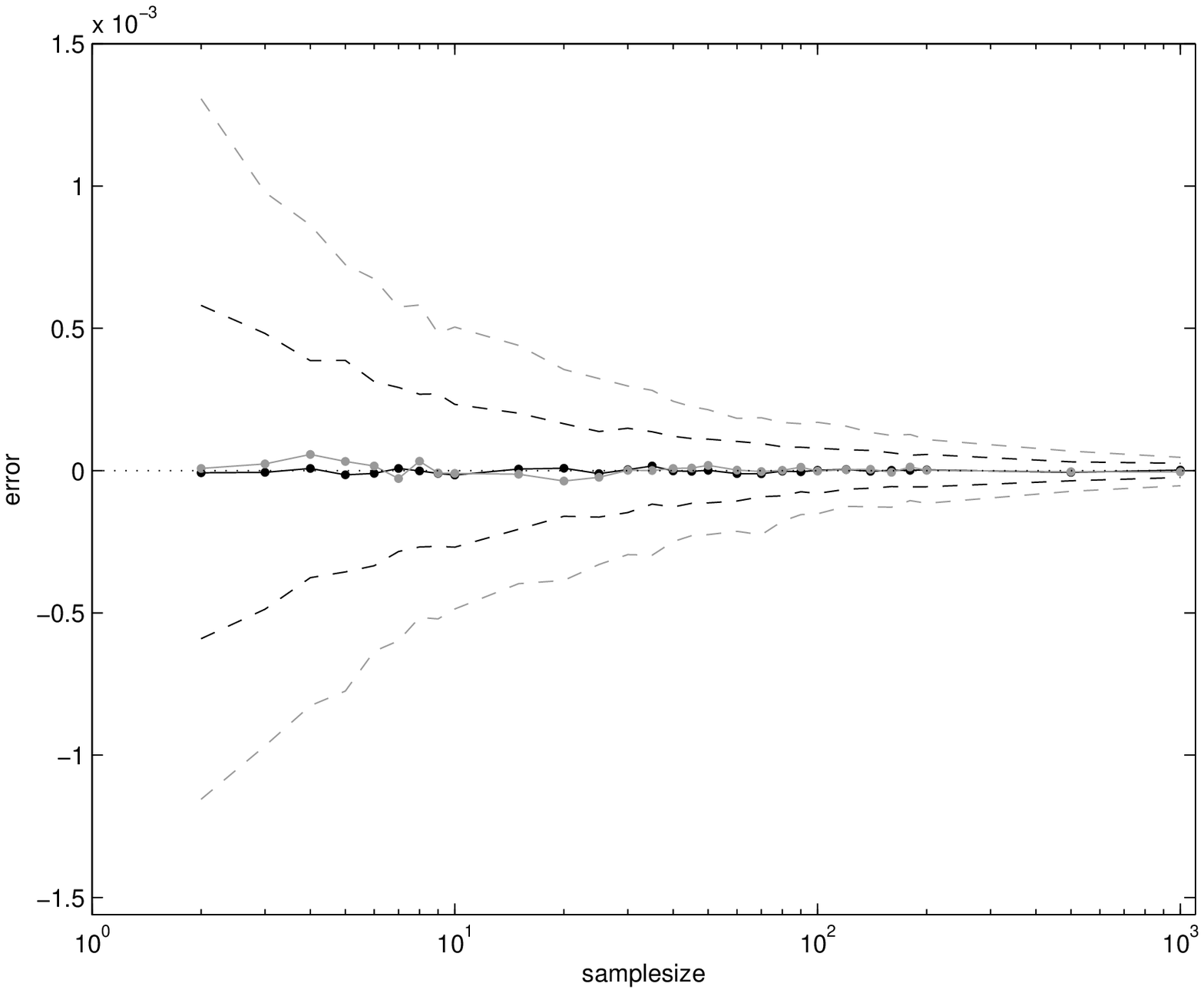}
   }}\caption{(Left) The ratio of the radii of $\bar\boldA_{m}$ to that of the expected set $\matexp{\boldA}$ in Example \ref{ex:ballSM} for
$\Theta \sim \unif{0.8}{1.2}$ (black) and $\Theta \sim \unif{0.5}{1.5}$ (gray). (Right) The difference between the boundary angles of $\bar\boldA_{m}$ and $\matexp{\boldA}$ in Example \ref{ex:planeSM} with $\Theta \sim \unif{\pi/8}{3\pi/8}$ (black) and $\Theta \sim \unif{0}{\pi/2}$ (gray). The median (solid) and the 25-th and 75-th percentile (dashed) values are based on 1000 simulations for each $m$.} \label{fig:sample}
\end{figure}

\begin{ex}[Disc with random radius] \label{ex:ballSM} Recall that for a closed ball $\boldA$ with a random radius $\Theta$, $\matexp{\boldA(\Theta)} = A(\matexp{\Theta})$ and $\matexp{\partial \boldA} = \partial A(\matexp{\Theta})$. For a sample of closed balls $\boldA_1, \ldots, \boldA_m$ with random radii $\Theta_1, \ldots, \Theta_m$, the emprirical ODF is $\bar{b}_m(x)=|x|-\bar{\Theta}_m$ and hence $\bar{\boldA}_m = A(\bar{\Theta}_m)$ and $\partial\bar{\boldA}_m = \partial A(\bar{\Theta}_m)$.

For an independent random sample of radii values $\Theta_{1}, \ldots, \Theta_{m}$, we construct $m$ random discs in $\RofN{2}$ centered at the origin. To compare the empirical boundary estimator with the theoretical average, we compute the ratio of the radii of $\matexp{\boldA}$ to $A(\bar{\Theta}_m)$; see Figure \ref{fig:sample}, left. The radii values were sampled from the uniform distribution $\unif{0.8}{1.2}$ (black) and $\unif{0.5}{1.5}$ (gray). The median values of the ratios (solid curves) and the 25-th and 75-th percentiles (dashed curves) are based on 1000 samples for each sample size $m$. The figure shows that the accuracy of the empirical estimate is higher for bigger sample sizes and for smaller variance values of the generating parameter.
\end{ex}

\begin{ex}[Random upper half-plane] \label{ex:planeSM} Consider an r.c.s. in Example \ref{example:tplane} given by $\boldA(\Theta) = \lbrace x\in \RofN{2} \colon x_{2} \geq x_{1}\tan \Theta \rbrace$, where $\Theta$ is the angle between the boundary of the plane and the positive horizontal axis. Recall that for $\Theta\sim\unif{a}{b}, \; \matexp{\boldA(\Theta)} = A(\matexp{\Theta})$ and $\matexp{\partial \boldA} = \partial A(\matexp{\Theta})$.

For an independent random sample of the angle values $\Theta_{1}, \ldots, \Theta_{m}$, we construct $m$ upper half-planes. The empirical mean boundary is estimated by the zero-level isocontour of the empirical mean ODF. Figure \ref{fig:sample} (right) shows the difference between the boundary angles for the empirical mean set and that of the expected set for $\Theta \sim \unif{\pi/8}{3\pi/8}$ (black) and $\Theta \sim\unif{0}{\pi/2}$ (gray). The median values of the ratios (solid curves) and the 25-th and 75-th percentiles (dashed curves) are based on 1000 replicates for each value of $m$. As in the previous case, the boundary estimate is more accurate for bigger sample sizes and for smaller variance values of $\Theta$. 
\end{ex}

The empirical mean set and the empirical boundary provide a recipe for practical construction of the set and boundary estimators. Examples \ref{ex:ballSM} and \ref{ex:planeSM} suggest the consistency of the boundary estimator, based on the empirical ODF. Note that both the expected set $\matexp{\boldA}$ and the empirical mean set $\bar\boldA_{m}$ are described as level sets of the continuous function $b_{\boldA}$ and hence, under certain conditions, the consistency of $\bar{\boldA}_{m}$ as an estimator of $\matexp{\boldA}$ can be inferred from \citep[Theorem 2]{Molchanov1998}. The result of \citep{Molchanov1998}, however, does not apply to the boundary estimator $\partial \bar{\boldA}_{m}$. In our upcoming paper, we study the consistency of the boundary estimator in greater detail \citet{JankowskiStanberryXX}.

\subsection{Loss Functions}

Assume that the observed data $\boldA \in \calF$ follows a distribution $F_{\parA}$ with mean parameter $\parA = \matexp{\boldA}, \; \parA \in\calF$. We call $\parA$ the parameter set. Here, data $\boldA$ represent random closed sets with nonempty boundary. In practice, it is more common to acquire images rather than observe sets per se, so the observed $\boldA$ can be seen as a result of the first-step analysis.  The goal is to estimate the parameter set $\parA$ given the observed data $\boldA$.

If set $\parA$ is parametric with parameter $\theta$, we can construct an estimator of $\parA$ using conventional loss functions for $\theta$. For example, if $\boldA$ is a collection of deformed balls, the distribution $F$ can be parametrized by the center and radius of a ball. Depending on the problem, we might be interested in testing the hypothesis about either the location or the size of the ball, or both. Then the optimal estimator $\widehat\theta$ is determined using standard loss functions, e.g. the indicator loss $\ell(\theta, a)=\ind{a=\theta}$, the absolute error loss $\ell(\theta, a) = |a-\theta|$ or the squared error loss $\ell(\theta, a) =(a-\theta)^{2}$. However, often, the sets of interest cannot be parametrized and at best can be described as ``blobs'', in which case conventional loss functions are not applicable. Here, we relay loss functions for Bayesian inference about sets introduced in \citet{StanberryBesagXX} in the frequentist framework. 

To begin with, we consider a loss function based on representation of a set $A$ given by its characteristic function $\charx{A}{x} = 1_{A}(x)$. The space of characteristic functions is complete with metric
\begin{equation}\label{metric:chi2}
d(A, B)=||\charomega{A}-\charomega{B}||_{L^q} =\left(\int_{\calD} |\charx{A}{x}-\charx{B}{x}|^q \lambda(\dd x)\right)^{1/q}.
\end{equation}
\noindent Here and later, the integration domain $\calD$ is either the entire image or its subset.
Let $\hatchi{}$ be an estimator of $\charomega{A}$. Given the metric $d$, we define an $\Lp{1}$ loss function as
\begin{equation}\label{loss:chi1}
\ell_{1}(\charomega{\parA},\hatchi{})=||\charomega{\parA}-\hatchi{}{}||_{L^1} =\int_{\calD} |\charx{\parA}{x}-\hatchi{}{}| \lambda(\dd x).
\end{equation}
Because the estimator $\hatchi{}$ is the characteristic function that minimizes the expected loss, the estimator $\widehat \parA$ of $\parA$ is immediately available. Strictly speaking, $d(A,B)$ distinguishes between sets up to a set of Lebesgue measure zero, so we assume that the parameter set has no punctures and hence $\widehat \parA$ is uniquely determined. Note that the equation \eqref{loss:chi1} is the measure of the symmetric difference, since 
\begin{equation}\label{loss:chiL1}
\ell_{1}(\charomega{\parA}, \hatchi{}) = \lambda(\parA \Delta \widehat\parA) = \lambda((\parA \setminus \widehat\parA) \cup (\widehat\parA \setminus \parA))
\end{equation}
and, hence, $\ell_{1}(\charomega{\parA}, \hatchi{})$ is akin to the absolute error loss in conventional settings. The set that minimizes $\matexp{\lambda(\parA \Delta \widehat \parA)}$ is the Vorob'ev median, i.e. is the median level set of the coverage function \eqref{eq:covFn}; \citep[see][p.~178]{Molchanov2005}. The Vorob'ev median is a set-analog of the ordinary median in the sense that, as the latter, the former minimizes the absolute error loss.

Moreover, the Vorob'ev median is optimal with respect to a much broader class of $\Lp{q}$ loss functions,
\begin{equation}\label{loss:chi}
\ell_{q}(\charomega{\parA},\hatchi{})=||\charomega{\parA}-\hatchi{}{}||_{L^q}, \quad 1\leq q <\infty.
\end{equation}
This is because $\charomega{\parA}$ takes values in $\lbrace 0, 1 \rbrace$ and the induced $\Lp{q}$ topologies \eqref{loss:chi} are all equivalent for $1\leq q < \infty$; \citep[see][Theorem~2.2]{DelfourZolesio2001}. 

Instead of optimizing the expected $\Lp{q}$ loss over the set of characteristic functions, consider minimizing \eqref{loss:chi} over a larger set of functions with values in $[0,1]$. For the squared $\Lp{2}$ loss
\begin{equation}\label{loss:chi2}
\ell_{2}(\charomega{\parA}, \hatchi{})=||\charomega{\parA}-\hatchi{}||^2_{\Lp{2}},
\end{equation}
the pointwise estimator $\hatchix{}{x}\in [0, 1]$ is the coverage function \eqref{eq:covFn}. The estimator $\hatchix{}{x}$ is a global minimum in $\Lp{2}$ since 
$\matexp{||\charomega{\parA}-\matexp{\charomega{\parA}}||^{2}_{\Lp{2}}} \leq \matexp{||\charomega{\parA}-\hatchi{}||^{2}_{\Lp{2}}}.$
Note that $\hatchi{}$ is essentially a gray-scale image with intensities reflecting the probability of a pixel being in $\parA$. 

The coverage function, however, is not necessarily a characteristic function, unless $\boldA$ is deterministic a.s.. Consider a set-valued estimator given by an excursion set of $\hatchix{}{x}$ so that determining the optimal estimator $\widehat{\parA}$ now reduces to choosing an appropriate threshold $u$ for the excursion set \eqref{eq:exSets}. The Vorob'ev criterion \eqref{def:vor} gives an estimator $\widehat \parA$, which is optimal with respect to Lebesgue measure in the sense that $ \matexp{\lambda(\parA \Delta \widehat\parA)}\leq \matexp{\lambda(\parA \Delta \widehat\parA)}$ for all measurable sets with measure $\matexp{\lambda(\boldA)}$ \citep[see][Theorem 2.3, pg.~177]{Molchanov2005}. From \eqref{loss:chiL1} and the equivalence of the $\Lp{q}$-topologies on the space of characteristic functions, it follows that the Vorob'ev expectation minimizes the expected loss $\eqref{loss:chi}$ for any $q\in [1, \infty)$ over all of the sets with Lebesgue measure $\matexp{\lambda(\boldA)}$. 

In general, given the representative functions $f_{A}$ and $f_{B}$ of sets $A$ and $B$, the discrepancy between $A$ and $B$ can be taken in terms of some (pseudo-) metric
\begin{equation}\label{dist:Badd}
\calm(A, B) = \calmw(f_{A}, f_{B}), 
\end{equation}
defined on a compact window $\calW\subseteq\calD$. Further, we suppress the subscript $\calW$ whenever possible. 
Given the metric $\calm(A, B) = ||f_{A} -f_{B}||_{\Lp{2}},$ the global estimator based on the squared $\Lp{2}$ loss
\begin{equation}\label{loss:Badd}
\ell_{2}(f_{\parA}, \widehat f) = ||f_{\parA} - \widehat f||_{\Lp{2}}^{2}
\end{equation}
is the average function $\matexp{f_{\boldA}}$. Hence, the distance-average expectation \eqref{def:daExp} is an optimal estimator of $\parA$ with respect to \eqref{loss:Badd}, among all of the level sets of $\matexp{f_{\boldA}}$.

Note that although the distance-average expectation arises naturally for the $\Lp{2}$ loss, it was originally defined for a generic (pseudo-) metric $\calm(\cdot, \cdot)$ \citep{BaddeleyMolchanov1998}. Thus, in principle, we can construct an estimator as the level set of the expected representation for arbitrary choice of $\calm$. For that, we must specify the representative function $f$, select an appropriate (pseudo-)metric $\calm$, and choose the loss function $\ell(f_{\parA}, \widehat f)$ based on $\calm$. The threshold $u$ in \eqref{def:daExp} then determines the optimal estimator with respect to $\ell(f_{\parA}, \widehat f)$ among all of the level sets of $\matexp{f_{\boldA}(x)}$. 

We now consider a representation of a set given by its ODF. Recall that the space of ODFs of sets with non-empty boundary is complete with metric $d(A, B)=||b_B-b_A||_{\Lp{q}}.$ For the loss function 
\begin{equation}\label{loss:odf}
\ell_{2}(b_{\parA}, \widehat{b})=||b_{\parA}-\widehat{b}_{\;}||^2_{\Lp{2}} = \int_{\calD} |b_{\parA}(x)-\widehat{b}(x)|^2 \lambda(\dd x),
\end{equation}
let $\widehat b$ be the estimator of $b_{\parA}$ that minimizes the expectation of \eqref{loss:odf}. Because the ODF gives a unique (up to the boundary) representation of a set, the estimator $\widehat \parA$ is given by the zero-level set of $\widehat b$. However, optimizing $\matexp{\ell_{2}(b_{\parA}, \widehat{b})}$ over the set of ODFs is nontrivial, so instead we consider unrestricted minimization over the set of functions defined on $\calD$. In this case, the pointwise estimator is the expected ODF, which is also a global minimum.  
Because the estimator $\matexp{b_{\boldA}(x)}$ is not necessarily an ODF itself, it does not uniquely determine the set. Hence, we take the set-valued estimator $\widehat{\parA}$ of $\parA$ to be the zero-level set of $\matexp{b_{\boldA}(x)}$, so that $\widehat \parA$ is the expected set as defined in \eqref{def:sdistMeanSet}. The estimator of the boundary $\partial \parA$ is given by the zero-level isocontour of $\matexp{b_{\boldA}(x)}$.


\subsection{Image Averaging}

In this section, we consider the example of image averaging originally discussed in \citet{BaddeleyMolchanov1998}. In image averaging, which relates to Bayesian image classification and reconstruction, the goal is is to determine an average object or a typical shape from the collection of images of the same scene or objects of the same type. 

To begin with, we quickly outline the image sampling procedure and refer for more details to \citet{BaddeleyMolchanov1998}. Figure \ref{fig:bmResults} (left) shows the true binary image, $I$, of the scanned and thresholded newspaper fragment. After adding Gaussian noise, 15 independent realizations from the posterior were obtained using a Gibbs sampler with true noise parameters in the likelihood and an Ising prior. In the current notations, $\calD = \calW = I$, $\parA$ is the true newspaper text and $\boldA_{1}, \ldots, \boldA_{15}$ are the 15 independent reconstruction of $\parA$. The data for this example was downloaded from \citet{BaddeleyWeb}.

\begin{figure}[!hbtp]
\centerline{
    \fbox{\includegraphics [scale=0.39]{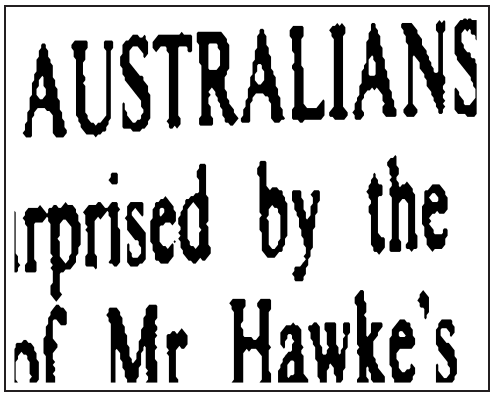}}
    }
\caption{The true binary image of the newspaper fragment.}  \label{fig:bmResults}
\end{figure}

Figure \ref{fig:bmResults} shows the ODF estimator $\widehat \parA_{ODF}$ (left) and the distance-average estimator $\widehat \parA_{DA}$ (right) with $f_{A}(\cdot) = b_{A}(\cdot)$ and $\calm(\cdot, \cdot) = \Lp{2}(\cdot, \cdot)$; see also Figure 6 in \citet{BaddeleyMolchanov1998}. The estimator $\widehat \parA_{ODF}$ appears less noisy and more accurate as compared to $\widehat \parA_{DA}$, where the letters look overinflated. 

\begin{figure}[!hbtp]
\centerline{
\fbox{\includegraphics [scale=0.39]{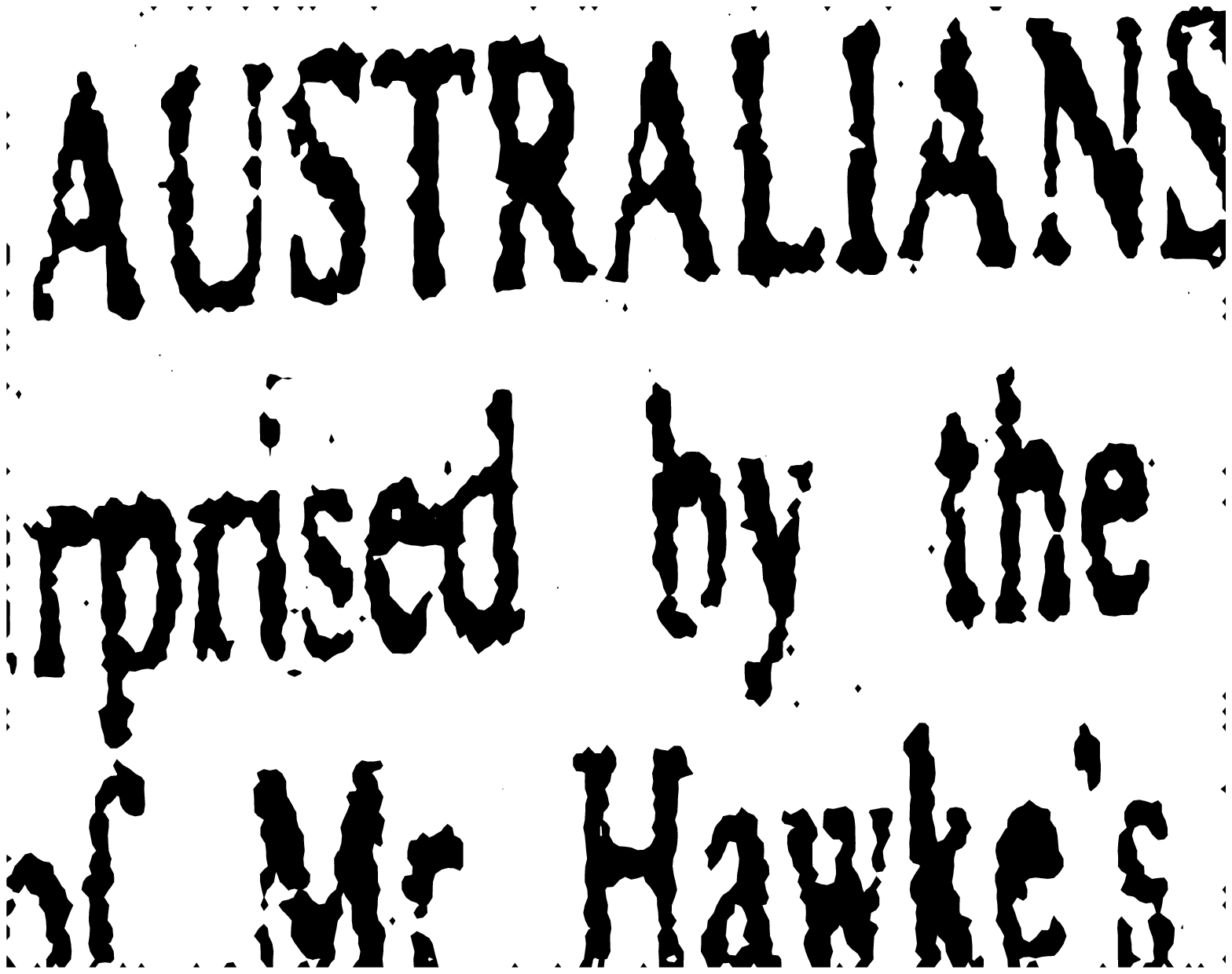}}
   \fbox{\includegraphics [scale=0.39]{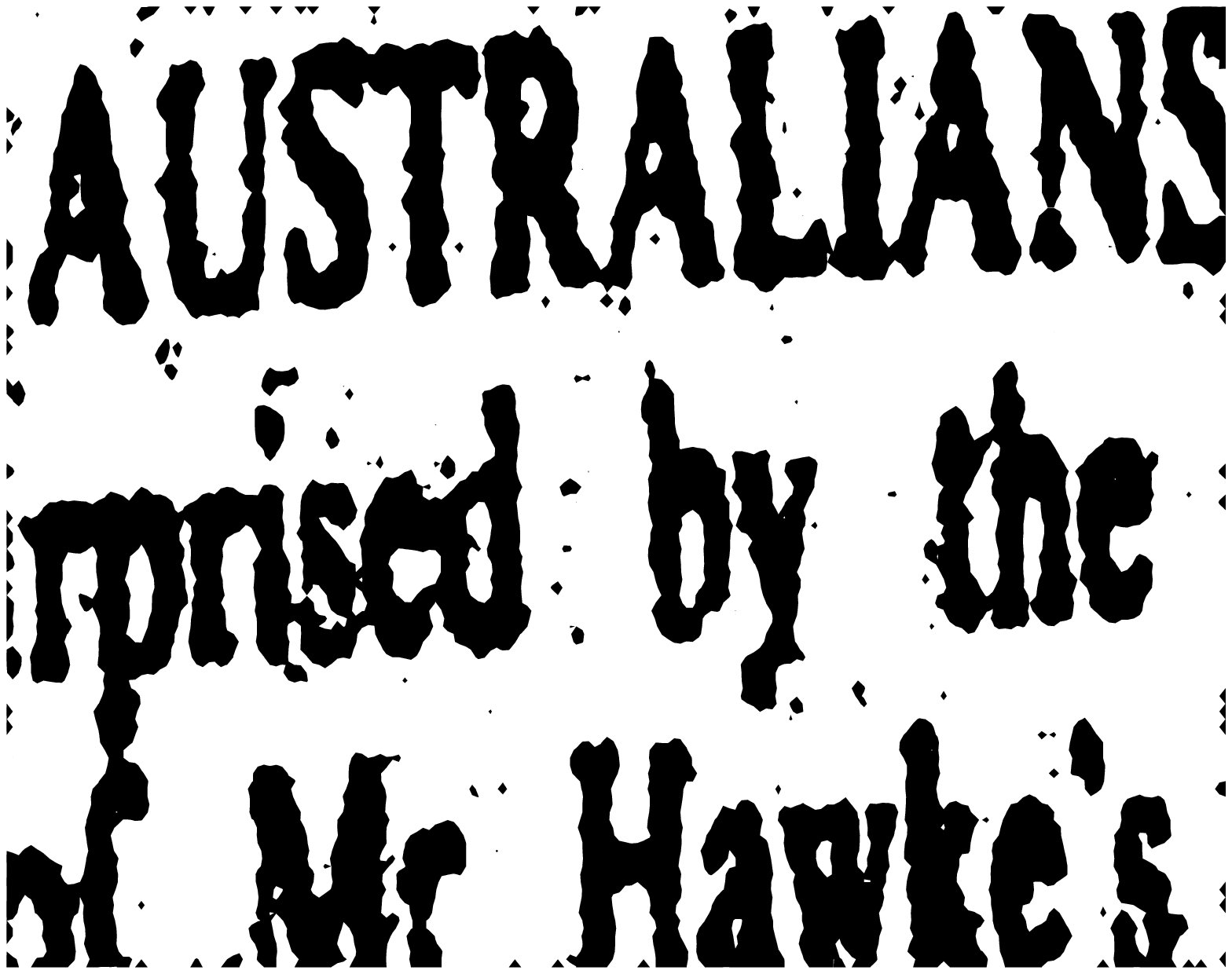}}
    }
\caption{The ODF estimator $\widehat \parA_{ODF}$ (left) and the distance-average estimator $\widehat \parA_{DA}$ (right).} \label{fig:bmAvODF}
\end{figure}

To compare the results of the reconstruction, it is instructive to look at the residual images for $\widehat \parA_{ODF}$ and $\widehat \parA_{DA}$ in Figure \ref{fig:bmResid}, left and right, respectively. Note that black (resp. gray) pixels in the figure correspond to the set $\widehat \parA \setminus \parA$ (resp. $\parA \setminus \widehat \parA$). The residual image for $\widehat \parA_{DA}$ shows a clear spatial pattern, indicating that $\widehat \parA_{DA}$ tends to overestimate the set. In turn, the residuals of $\widehat \parA_{ODF}$ show little spatial clustering. In addition, $\widehat \parA_{ODF}\subset \widehat \parA_{DA}$, that is the distance-average estimator is inclusive of the ODF one, with discrepancies observed along the boundary and in the background. 

\begin{figure}[!hbtp]
\centerline{
    \fbox{\includegraphics[width=0.48\textwidth, height=0.245\textheight]{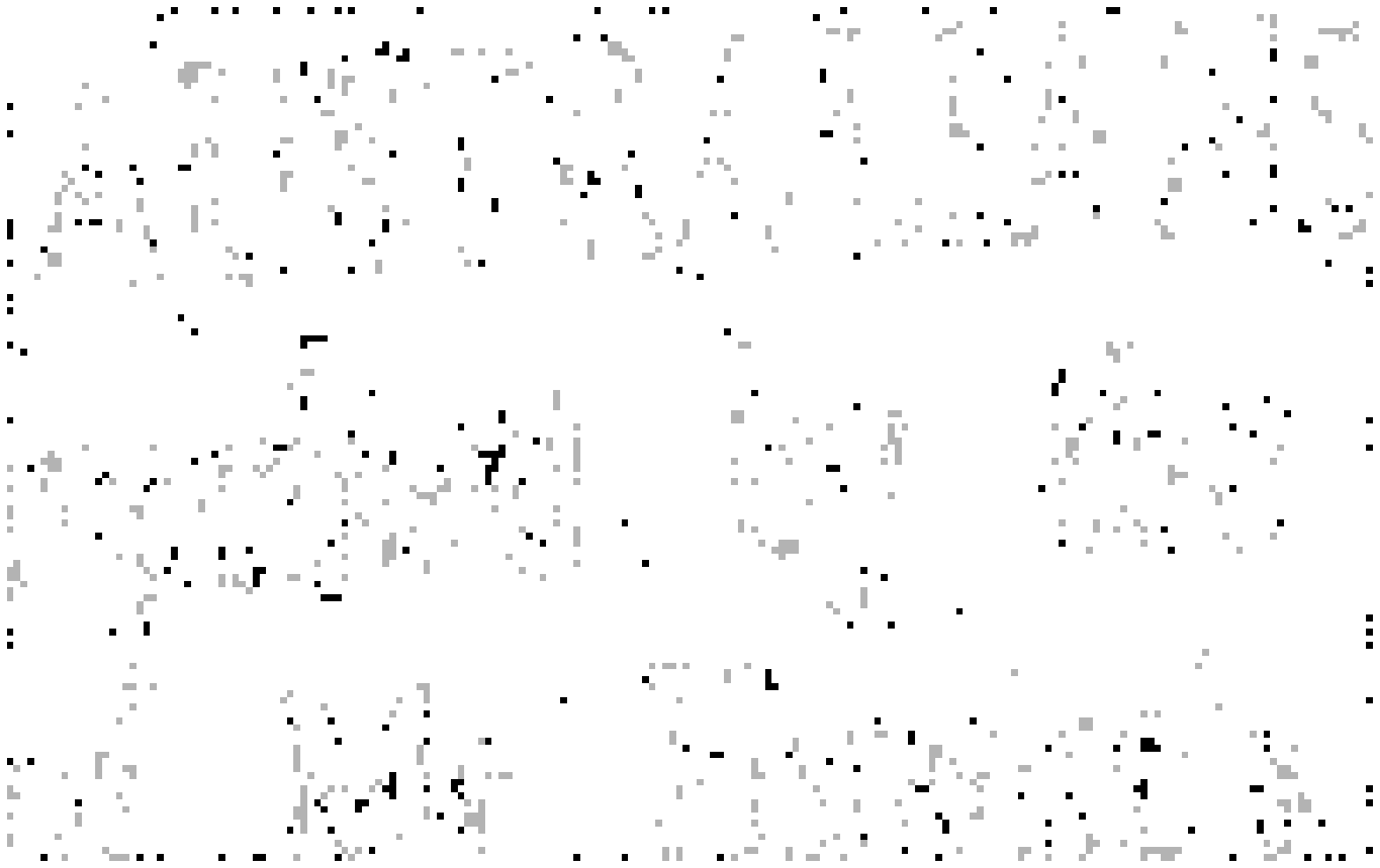}}
    \fbox{\includegraphics[width=0.48\textwidth, height=0.245\textheight]{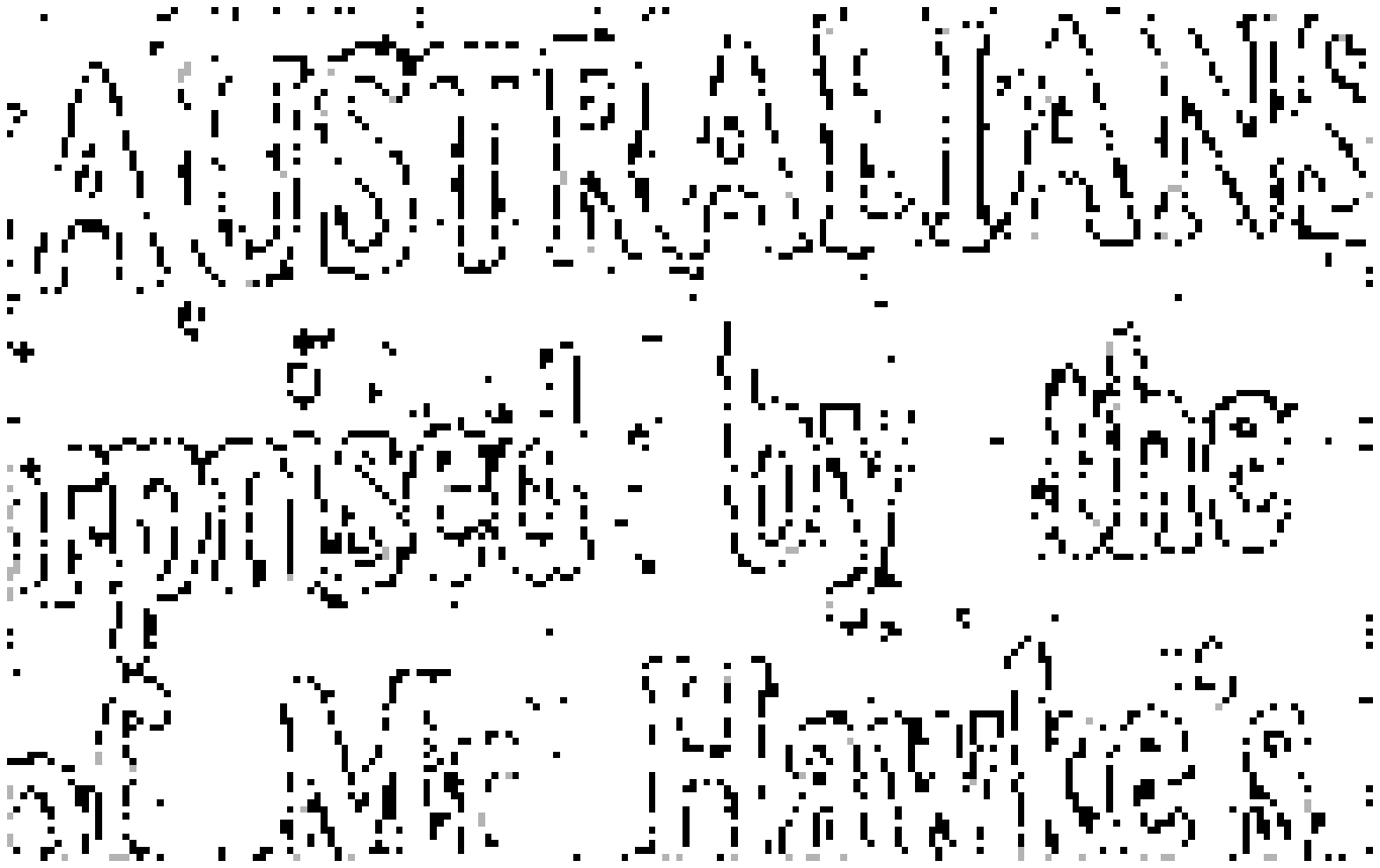}}
    }
\caption{The residual images of the discrepancy between the estimated set and the truth for $\widehat \parA_{ODF}$ (left) and $\widehat \parA_{DA}$ (right). Black (resp. gray) pixels mark set $\widehat \parA \setminus \parA$ (resp. $\parA\setminus \widehat \parA$).}\label{fig:bmResid}
\end{figure}

To compare the quality of the estimators, it is common to use the fraction of misclassified pixels. Here, the error is 4.12\% for $\widehat \parA_{ODF}$ and 10.68\% for $\widehat \parA_{DA}$. However, a small misclassification error does not necessarily imply similar looking images \citep{Baddeley1992}, therefore, we compare the two estimators by computing an $\Lp{2}$ distance to the truth
\begin{displaymath}
d(\parA, \widehat \parA) = (\int_{\calD} |b_{\parA}(x) - b_{\widehat \parA}(x)|^{2} \dd x)^{1/2},
\end{displaymath}
which is compatible with the distance-average expectation. Calculating the distances for the two estimators, we obtain $d(\parA, \widehat \parA_{ODF})  = 1.78$ and $d(\parA, \widehat \parA_{DA})=2.66$, which shows that the ODF estimator outperforms the distance-average reconstruction.

Note that the reconstruction results using ODFs do not change if we reverse the color of pixels, as opposed to the distance-average estimator with $f_{A}(\cdot) = d_{A}(\cdot)$, where the swapping affects the quality of the reconstruction \citep[see][Figures 4 and 5]{BaddeleyMolchanov1998}.



\subsection{Boundary Reconstruction in Noisy Images}

In image analysis, it is often required to reconstruct the boundary of an object in a noisy image. Low level methods such as edge detection and intensity thresholding strongly depend on image characteristics and require a careful tuning of parameters to produce satisfactory results \citep{Canny1986, Gonzalez2002}. A number of methods for boundary reconstruction have been proposed in the literature. In active contour models, the boundary is represented by a parametric or a free-form curve and the reconstruction is driven by local image forces, with smoothness controlled via user-imposed regularity constraints \citep{BlakeIsard1998, Brigger2000, ElBaz2005, Flickner1994, Kass1988, Menet1990, Rueckert1997, Tauber2004, Terzopoulos1987, Yuille1992}. The use of local properties of the image implies strong dependence of the boundary estimate on initialization of the modeling curve, a contrast-to-noise ratio of the image and convexity of the object. Various extensions were proposed in the literature to improve the performance of active contours \citep{Coughlan2000, ElBaz2005, Wang1996, Cohen1991, Xu1998}.

Many Bayesian methods for boundary reconstruction use templates which represent local or global geometry of an object \citep{Grenander1976, Tjelmeland1998, Amit1991, Grenander1996, Jain1996, GrenanderKeenan1993, Grenander1996, Jain1998, Chalmond2003, Geman1990}. The stochastic method in \citet{Qian1996} models the boundary by closed polygons. The method was then extended to allow for the dynamic estimation of the number of polygon vertices, thus balancing the complexity and the flexibility of the model \citep{PievatoloGreen1998}. However, the reconstruction is given by a gray-scale image, where intensities reflect the posterior mean state of pixels, so that constructing a boundary estimator requires additional post-processing. 

In \citet{StanberryBesagXX}, the authors proposed a new method to reconstruct a smooth connected boundary of an object in a noisy image using B-spline curves. The method is Bayesian and uses a Markov chain Monte Carlo algorithm to obtain curve samples from the posterior. Using the squared $\Lp{2}$-loss based on ODFs, the posterior estimator is given by the posterior expected boundary as defined in \eqref{def:odfMean}. The simulation study showed that the ODF estimator is more accurate as compared to the distance-average reconstruction and Vorob'ev sets. 

\subsection{Implementation}

In addition to its appealing theoretical properties, the ODF expectation can be efficiently computed, using algorithms for the distance function \citep{Breu1995, Freidman1977, Rosenfeld1966}. The ODF is computed by combining the distance function for the original image and that of the inverted image, i.e. the black and white colors are interchanged. The examples and results reported in this paper were coded in MATLAB (R2008A, The MathWorks), where the distance function to a set is computed using the \texttt{bwdist} command. 

In comparison to the distance-average and Vorob'ev expectations, the ODF average is more efficient, because it requires no optimization since the threshold level is fixed at zero. This difference is particularly prominent in Bayesian reconstruction, where the estimator is based on thousands of samples from the posterior. In addition, the storage requirements for the ODF reconstruction in Bayesian framework are miniscule, as we only need to update the average after every sweep.

\section{Discussion}
\label{sec:discussion}

In this paper, we present new definitions of the expected set and the expected boundary, using the ODF representation of a set. In conventional settings, where the parameter of interest is a scalar or a vector, it is straightforward to construct an estimator of the mean from the observed data. However, statistical inference about sets is nontrivial. When dealing with sets, one has to determine the features that are important to emphasize and develop inference methods based on them. These features may include the location, the size, or the orientation of a set. If a set has an analytic representation, the most natural approach to set inference is via inference about its parameters. This, however, is not applicable for sets with arbitrary geometry. 

A number of existing definitions of the expected set are based on the linearization idea outlined in \citet{Molchanov2005}, where inference is made on the space of functions representing a set. The resulting estimator necessarily depends on the type of linearization and the choice of the representative function.  In this paper, we use an approach similar in spirit to the linearisation idea and define the expectation based on the ODF representation of a set. However, we forgo the optimality criterion and simply choose zero as a threshold. Whilst the choice might appear simplistic, it gives an expectation with attractive theoretical traits, including equivariance, convexity-preservation and inclusion properties. 

The definition is particularly appealing for problems in image analysis. In particular, its denoising property implies that random specks will be averaged out unless observed with probability one. In comparison, the distance-average expectation is not empty by construction, and therefore identifies false positives in any noisy image. Furthermore, the distance-average reconstruction strongly depends on the choice of pseudo-metric, representative function, restriction window and any parameters of the above. In turn, the selection expectation depends on the structure of the probability space, whilst the Lebesgue measure criterion of the Vorob'ev expectation is not well suited for image analysis, although, the optimality of the Vorob'ev median is rather attractive. 

In our upcoming paper, we study the asymptotic properties of the ODF estimators for the boundary and describe a method for constructing confidence sets \citep{JankowskiStanberryXX}. 


\end{document}